\newcommand{\Toro}[2]{\Phi_{\pmb{#1}}(#2)}
\newcommand{\CurvaT}[1]{\pmb{s}_{T_{#1}}}
\newcommand{\R}{\mathbb{R}}
\newcommand{\Vo}{\mathcal{V}}
\newcommand{\Sc}{\mathcal{S}\mathcal{C}}
\newcommand{\PP}{\mathcal{P}}
\newcommand{\cc}{\pmb{c}}
\newcommand{\bb}{\pmb{b}}
\newcommand{\uu}{\pmb{u}}
\newtheorem{teo}{Theorem}
\newtheorem{lema}{Lemma}
\newtheorem{ex}{Example}
\newtheorem{prop}{Proposition}
\begin{document}


%
\title{Curves on Flat Tori and Analog \\ Source-Channel Codes}

\author{Antonio~Campello,~\IEEEmembership{Student~Member,~IEEE,}
        Cristiano~Torezzan,
        and~Sueli~I.~R.~Costa,~\IEEEmembership{Member,~IEEE}
\thanks{A. Campello and S. Costa are with Institute of Mathematics, Statistics and Computer Science, University of Campinas, S\~ao Paulo, Brazil, 13083-759}
\thanks{C. Torezzan is with School of Applied Sciences, University of Campinas, S\~ao Paulo, Brazil, 13484-350}
\thanks{Part of this work was presented in the International Symposium of Information Theory (ISIT) 2012, Boston - MA.}}

\maketitle



%


\maketitle

\begin{abstract}
In this paper we consider the problem of transmitting a continuous alphabet discrete-time source over an AWGN channel in the bandwidth expansion case. We propose a constructive scheme based on a set of curves on the surface of a $2N$-dimensional sphere. Our approach shows that the design of good codes for this communication problem is related to geometrical properties of spherical codes and projections of $N$-dimensional rectangular lattices. Theoretical comparisons with some previous works in terms of the mean squared error as a function of the channel SNR, as well as simulations, are provided.
\end{abstract}

\section{Introduction}

The problem of designing good codes for a continuous alphabet source to be transmitted over a Gaussian channel with power constraint and bandwidth expansion can be viewed as the one of constructing curves in the Euclidean space of maximal length and such that its folds (or laps) are a good distance apart. When the channel noise is below a certain threshold, a bigger length essentially means a higher resolution when estimating the sent value. On the other hand, if the folds of the curve come too close, this threshold will be small and the mean squared error (mse) will be dominated by large errors as  consequence of decoding to the wrong fold. Explicit constructions of curves for analog source-channel coding for bandwidth expansion were presented, for example, in \cite{Sueli} and \cite{polynomial}. For insights on the bandwidth reduction case we refer the reader to \cite{bandwidthReduction}.

In this work, we consider spherical curves in the $2N$-dimensional Euclidean space. We develop an extension of the construction presented in \cite{Sueli} to a set of curves on layers of flat tori. Our approach explores the geometrical properties of spherical codes and projections of $N$-dimensional lattices. In the scheme presented here, homogeneity and low decoding complexity properties were preserved from \cite{Sueli} whereas the total length can be meaningfully increased. Theoretical results as well as simulations show that our approach outperforms the previous ones in terms of the tradeoff between SNR and mean squared error.

This paper is organized as follows: in Sections II and III, we introduce some mathematical/information-theoretical tools used in our approach. In Section IV we present a scheme to design piecewise homogeneous curves on the Euclidean sphere and describe the encoding/decoding process. In Section V we derive an extended version of the \textit{Lifting Construction} \cite{FatStrut} suitable to our problem and present some examples and length comparisons with some previous constructions. In Section VI we provide an analysis of the asymptotic behavior of the mse as a function of the channel SNR and in Section VII simulation results are exhibited.

\section{Communication Framework}
The communication system we consider here is illustrated below. The objective is to transmit an input real value $x$ associated to the random variable $X$ over a channel of dimension $N$ with additive Gaussian noise and average power constraint. The encoder uses a bandwidth expansion mapping in order to encode $x$ to a point $\pmb{s}(x) \in \mathbb{R}^N$ respecting the constraint $E \left[\pmb{s}(X) \pmb{s}(X)^t\right] \leq P$. The decoder recovers an estimate $\hat{x}$ for the sent value attempting to minimize the mean squared error (mse) $E[ (X-\hat{X})^2]$. The source is assumed to have pdf with support $\left[0, 1\right)$ and $\pmb{s}([0,1))$ will be referred to as the \textit{signal locus}. If $\pmb{s}$ is a continuous mapping, then the signal locus is a curve in $\mathbb{R}^N$. 

\begin{figure}[!htb]
\centering
\includegraphics[scale=0.65]{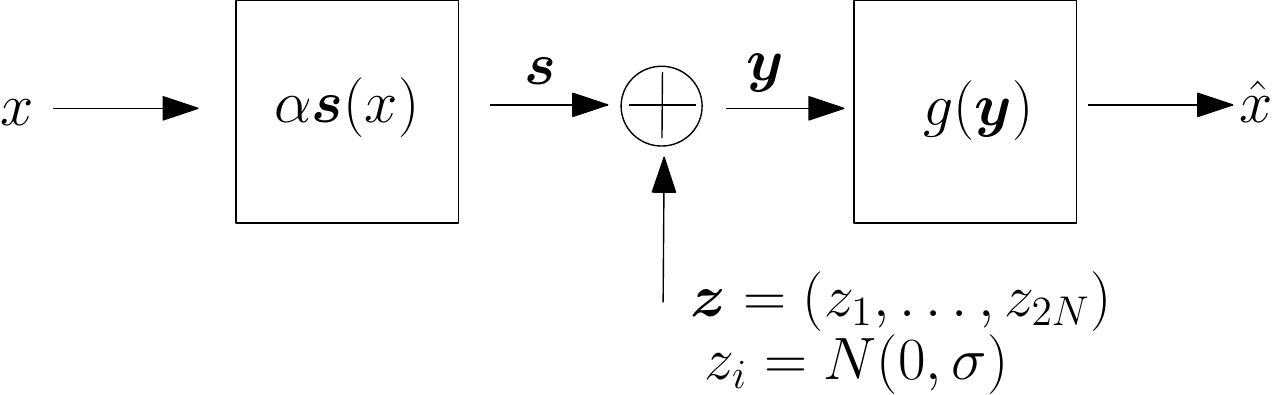}
\caption{Communication system}	
\label{fig:com}
\end{figure}

This communication model is carefully analyzed in \cite[Ch. 4]{Sakrison}. Given a radius $\rho > 0$, let $E_\rho$ denote the event $
\left\{\left\| \pmb{z} \right\| < \rho\right\}$. The mse can be split into two terms such that:
\begin{equation} \mbox{mse} = E[(X-\hat{X})^2|E_\rho]P(E_\rho) + E[(X-\hat{X})^2|E_\rho^c]P(E_\rho^c).
\end{equation}

For high values of the SNR $P/\sigma^2$, $P(E_\rho^c) \approx 0$ and the mse is well approximated by $E[(X-\hat{X})^2|E_\rho]$ (the goodness of the approximation depends on the SNR and $\rho$). In this case, called the \textit{low noise regime}, it is possible to show \cite{Sakrison} that:

\begin{equation}
E[(X-\hat{X})^2] \approx \sigma^2 \int_0^1 p(x) \left\| \dot{\pmb{s}}(x) \right\|^{-2}dx := E_{\small \mbox{low}}[(X-\hat{X})^2],
\label{eq:lownoise}
\end{equation}
where $p(x)$ is the pdf of the source. Most of the analysis in Section \ref{sec:mse} will be done under the low noise regime. If the source is uniform, then an optimal parametrization for $\pmb{s}(x)$ yields 

\begin{equation}
E_{\small \mbox{low}}[(X-\hat{X})^2] = \frac{\sigma^2}{L^2},
\end{equation}
where $L$ is the curve length.

A geometric picture of the possible effects of the noise on the estimated value is given by Figure 1. Under the low noise regime, all the calculations can be essentially done by approximating $\pmb{s}(x)$ by its tangent line and the system will be described by Equation \eqref{eq:lownoise}. However, this is not true for large errors. Decoding to the wrong fold of the curve yields a poor performance of the system. This suggests that our objective is to choose long curves with a large distance between its folds.

In our constructions, the signal locus will be a collection of curves in $\mathbb{R}^N$ rather than a single curve (this can be viewed as a ``piecewise continuous'' curve). Our analysis is mainly focused on uniform sources, although the scheme can be easily adapted to any source with limited support and even to Gaussian sources, by means of a companding function \cite{Sakrison}, as shown in Section VII.

\begin{figure}[!htb]
\centering
\includegraphics[scale=0.55]{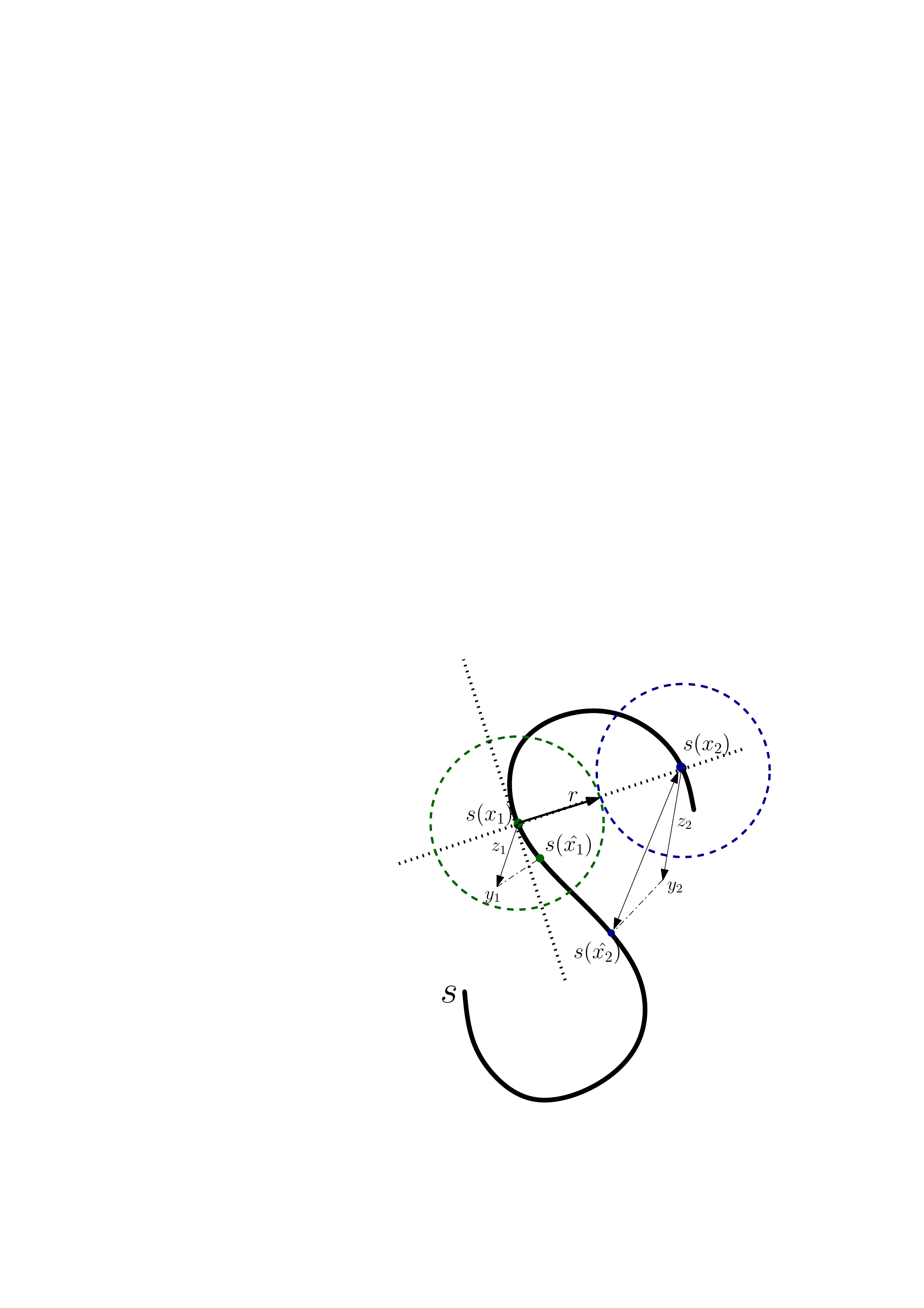}
\caption{Small errors and large errors.}
\label{fig:noise}
\end{figure}

\section{Background}
\label{sec:background}
\subsection{Flat Tori}
The unit sphere $S^{2N-1} \subset \R^{2N} $ can be foliated with flat tori \cite{BergerGostiaux, Cristiano} as follows. For each unit vector ${\cc = (c_{1},c_{2},..,c_{N}) \in \R^{N}}, c_i > 0$, and ${\pmb{u}=(u_1,u_2,\ldots,u_N) \in \R^N}$, let $\Phi_{\cc}:\R^N \rightarrow \R^{2N}$ be defined by
\begin{equation}
\small{
\Phi_{\cc} (\pmb u)=\left(c_{1}\left(\cos \frac{u_{1}%
}{c_{1}},\sin \frac{u_{1}}{c_{1}}\right),\dots,c_{N}\left(\cos \frac{u_{N}}{c_{N}},\sin \frac{u_{N}}{c_{N}}\right)\right).
}
\label{eq:Toro}
\end{equation}
The $N$-periodic function $\Phi _{\cc}$ is a local isometry on its image, the torus $ T_{\cc}$, a flat $N$-dimensional surface contained in the unit sphere $S^{2N-1}$. The torus $ T_{\cc} = \Phi_{\cc}(R^N)$ is also the image of the hyperbox:
\begin{equation}
\label{para}
\PP_{\cc} := \{\uu \in \R^{N}; 0 \leq u_{i} < 2 \pi c_{i}\}, \ \ 1\leq i\leq N.
\end{equation}

Note also that each vector of $S^{2N-1}$ belongs to one, and only one, of these flat tori if we also consider the degenerated cases where some $\cc_i$ may vanish. Thus we say that the family of flat tori $T_{\cc}$ and their degenerations, with $\cc = (c_{1},c_{2},..,c_{N})$, $ \left\|  \cc  \right\|  =1$, $c_{i}  \geq 0$, defined above is a foliation of the unit sphere of $S^{2N-1}\subset \R^{2N}.$

It can be shown (Proposition 1 in \cite{Cristiano}) that the minimum distance between two points, one in each flat torus $T_{\bb}$ and $T_{\cc}$, is 
\begin{equation}
\label{eq:distdoistoros}
d(T_{\cc},T_{\bb})= \left\|  \cc- \bb \right\| = \left( \sum_{i=1}^N (c_i - b_i)^2\right)^{1/2}.
\end{equation}

The distance between two points on the same torus $T_{\pmb{c}}$, given by
\begin{equation*}
d(\Phi_{\pmb{c}}(\pmb{u}),\Phi_{\pmb{c}}(\pmb{v}))=2\sqrt{\sum c_{i}^{2}\sin^{2}\left(\frac{u_{i}-v_{i}%
}{2c_{i}}\right)}\label{SameTorus}
\end{equation*}
is bounded in terms of $\|\pmb{u}-\pmb{v}\|$ by the following proposition.
\begin{prop}\cite{Cristiano}
Let $\pmb{c=}(c_{1},c_{2},..,c_{N})$, $ \left\|  \cc  \right\|  =1$,
and let  $\pmb{u}, \pmb{v} \in \PP_{\cc}$. Let $\Delta =  \left\|\pmb{u}-\pmb{v}\right\|$ and $\delta= \left\|\Phi_{\pmb{c}}(\pmb{u})- \Phi_{\pmb{c}}(\pmb{v}) \right\|$.
Then
\begin{equation}
\displaystyle 
\frac{2}{\pi}\Delta\leq\dfrac{\sin\frac{\Delta}{2c_{\xi}}}{\frac{\Delta}{2c_{\xi}%
}}\Delta\leq\delta\leq\dfrac{\sin\frac{\Delta}{2}}{\frac{\Delta}{2}}\Delta
\leq\Delta
\label{eq:FamousRelation}
\end{equation}
where $\displaystyle c_{\xi} = \min c_i$.
\\
\end{prop}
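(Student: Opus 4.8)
The plan is to reduce everything to elementary monotonicity and convexity properties of $g(t):=\sin(t)/t$, applied coordinatewise to the exact chord formula given just above the statement. Write $\theta_i:=(u_i-v_i)/(2c_i)$; then $\delta^2=4\sum_i c_i^2\sin^2\theta_i=\sum_i(u_i-v_i)^2\,g(\theta_i)^2$, while $\Delta^2=\sum_i(u_i-v_i)^2=4\sum_i c_i^2\theta_i^2$. First I would record the facts that $g$ is even, strictly decreasing on $[0,\pi]$ (since $t\cos t-\sin t<0$ there), with $g(0)=1$, $g(\pi/2)=2/\pi$, and $|g|\le 1$ everywhere. Since $|u_i-v_i|\le\|\uu-\pmb v\|=\Delta$ and $c_i\ge c_\xi$, every $|\theta_i|\le\Delta/(2c_\xi)$; and, for $\Delta>0$, the leftmost inequality of the claim is precisely $g(\Delta/(2c_\xi))\ge 2/\pi$, i.e. $\Delta/(2c_\xi)\le\pi/2$, which is therefore the regime we work in, so that all $\theta_i\in[-\pi/2,\pi/2]$.

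For the lower bounds, I would estimate $\delta^2$ term by term: since $|\theta_i|\le\Delta/(2c_\xi)\le\pi/2$ and $g$ is positive and decreasing there, $g(\theta_i)^2\ge g(\Delta/(2c_\xi))^2$, so summing and using $\Delta^2=4\sum_i c_i^2\theta_i^2$ yields $\delta^2\ge g(\Delta/(2c_\xi))^2\,\Delta^2$, which is the second inequality; the first then follows from $g(\Delta/(2c_\xi))\ge 2/\pi$. For the upper bounds, the trailing inequality $g(\Delta/2)\,\Delta\le\Delta$ is immediate from $|g|\le 1$, so the work is the step $\delta\le 2\sin(\Delta/2)=g(\Delta/2)\,\Delta$. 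This I would get by a Jensen argument exploiting $\|\cc\|=1$: with weights $w_i=c_i^2$ (so $\sum_i w_i=1$) and $f(s):=\sin^2(\sqrt s)$, one has $f'(s)=g(2\sqrt s)$, which is decreasing on $[0,(\pi/2)^2]$, hence $f$ is concave there; applying Jensen to the points $\theta_i^2\in[0,(\pi/2)^2]$ gives $\sum_i c_i^2\sin^2\theta_i=\sum_i w_i f(\theta_i^2)\le f\big(\sum_i w_i\theta_i^2\big)=\sin^2(\Delta/2)$, so $\delta^2\le 4\sin^2(\Delta/2)$, and since $\Delta/2\le\pi/2$ this is exactly $\delta\le 2\sin(\Delta/2)$.

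I expect the only real obstacle to be the upper bound: unlike the lower bound it cannot be done term by term, since a naive coordinatewise comparison of $\sin\theta_i$ with $\theta_i$ recovers only the weak estimate $\delta\le\Delta$; one genuinely needs the sphere normalization $\sum_i c_i^2=1$ together with the concavity of $s\mapsto\sin^2\sqrt s$ on $[0,(\pi/2)^2]$. Verifying that concavity --- equivalently, that $g$ decreases on $[0,\pi]$ --- is the one unavoidable short computation (the sign of $t\cos t-\sin t$); everything else is bookkeeping with the chord formula and the bound $|\theta_i|\le\Delta/(2c_\xi)$.
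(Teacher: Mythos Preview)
The paper does not actually prove this proposition: it is quoted verbatim from the reference \cite{Cristiano} and no argument is supplied here, so there is no ``paper's own proof'' to compare against. Your argument is therefore being judged on its own merits, and it is correct. The reduction to the coordinatewise chord formula, the lower bound via the monotonicity of $g(t)=\sin t/t$ on $[0,\pi]$ together with $|\theta_i|\le\Delta/(2c_\xi)$, and the upper bound via Jensen for the concave function $s\mapsto\sin^2\sqrt{s}$ with weights $c_i^2$ summing to $1$, all go through exactly as you describe.

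One point worth making explicit in a written-up version: the full chain of inequalities as stated only holds in the regime $\Delta\le\pi c_\xi$ (equivalently $\Delta/(2c_\xi)\le\pi/2$), which you correctly identify as forced by the leftmost inequality. Outside that regime the leftmost inequality can fail, and your Jensen step for the upper bound also uses $\theta_i^2\in[0,(\pi/2)^2]$; so the hypothesis is genuinely needed, not merely convenient. The paper uses the proposition only for small $\Delta$ (to relate the small-ball radius on the torus to the pre-image distance $r_{\pmb c}(\pmb u)$, cf.\ \eqref{eq:distortionCurve}), so this restriction is harmless in context, but it should be stated.
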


\subsection{Curves}
\label{subsec:curves}
As pointed out earlier in \cite{Sueli}, some important properties of a curve from a communication point of view are its \textit{stretch} and \textit{small-ball radius}. Given a curve $\pmb{s}: [a,b] \mapsto \mathbb{R}^N$, the Voronoi region $V(x)$ of a point $\pmb{s}(x)$ is the set of all points in $\mathbb{R}^N$ which are closer to $\pmb{s}(x)$ than to any other point of the curve. If $H(x)$ denotes the hyperplane orthogonal to the curve at $\pmb{s}(x)$, then the maximal \textit{small-ball radius} of $\pmb{s}$ is the largest $r > 0$ such that $B_r(\pmb{s}(x)) \cap H(x) \subset V(x)$ for all $x \in [a,b]$, where $B_r(\pmb{s}(x))$ is the Euclidean ball of radius $r$ centered at $\pmb{s}(x)$. This means that the ``tube'' of radius $r$ placed along the curve does not intersect itself. The \textit{stretch} $\mathcal{S}(x)$ is the function $\left\| \dot{\pmb{s}}(x) \right\|$ where $\dot{\pmb{s}}(x)$ is the derivative of $\pmb{s}(x)$. The length of a curve is given by $\int_a^b \mathcal{S}(x) dx$. In this paper we will be interested in curves with large length and small-ball radius.

\section{The Torus Layer Scheme}
The design of those curves in this section is essentially divided in two parts. First, we choose a collection of $M$ tori on the surface of the Euclidean sphere $S^{2N-1}$ at least $2 \delta$ apart. The approach for doing this is via discrete spherical codes in $\mathbb{R}^N$. Second, we show a systematic way of constructing curves on each layer, via projection lattices in $ \mathbb{R}^{N-1}$. Finally, we give a description of the whole signal locus and summarize the encoding process.

For the Torus Layer Scheme, the unit interval will be partitioned into $M$ intervals of different length, and each of them mapped into a curve on one of the layers. It is worth noticing that, for the special case $M=1$, if we choose the torus associated to the vector $c = \frac{1}{\sqrt{N}} (1,\ldots,1)$ to encode the information, then the scheme proposed here is exactly the one analysed in \cite{Sueli}. As we show next, for $M > 1$ the curves presented here outperform the ones presented in \cite{Sueli} (and also in \cite{FatStrut}) in terms of length and small-ball radius. 
\subsection{Torus Layers}
\label{sec:torusLayers}
Given a target small ball radius $\delta$, the first step of our approach is to define a collection $T=\left\{T_1, T_2, \cdots, T_M \right\}$ of flat tori on $S^{2N-1}$ such that the minimum distance \eqref{eq:distdoistoros} between any two of them is greater than $2\delta$. This step is equivalent to design a $N$-dimensional spherical code $\Sc_+$ with minimum distance $2\delta$ and such that its $M$ codewords have non-negative coordinates.

Each point $\cc \in \Sc_+$ defines a hyperbox $\PP_{\cc}$ (\ref{para}) and hence a flat torus $T_{\cc}$ in the unit sphere $S^{2N-1}$. By \eqref{eq:distdoistoros} we can assert that the distance between two of those tori is at least $2 \delta$.

There are several ways of constructing spherical codes that can be employed here, e.g. \cite{eric,Hamkins1} or even on layers of flat tori as introduced in \cite{Cristiano}. We present next an example of a family of discrete spherical codes in $\mathbb{R}^N$ that will be useful in Section \ref{sec:mse}.
\begin{ex}
Let
\begin{equation}
\displaystyle
\pmb{c}(t) = \frac{ (1,1+t,1+2t,\cdots, 1+(N-1)t)}{\sqrt{\sum_{i=0}^{N-1}(1+it)^2}}, \mbox{ } t>0
\label{eq:torosPerm}
\end{equation}
be a non-negative $N$-dimensional unit vector.

The set $\mathcal{S} \mathcal{C}_{c(t)} =  \{ \sigma(\pmb{c}(t)): \sigma \in S_N \}$ of all permutations of $\pmb{c}(t)$ defines a $N$-dimensional spherical code with non-negative coordinates, minimum distance equals to $$d(t)=\frac{t\sqrt{2}}{\sqrt{\sum_{i=0}^{N-1}(1+it)^2}},$$ and cardinality $M = N!$. This set can be used in the first step of the construction described in this section.

Note that  
\begin{equation} \displaystyle \lim_{t\to 0} \pmb{c}(t) = \frac{1}{\sqrt{N}} (1,\ldots,1) := \pmb{\hat{e}},
\end{equation}
and, for each $t > 0$, the codewords of $\mathcal{S} \mathcal{C}_{c(t)}$ are equidistant from $t\pmb{\hat{e}}$, the vector that determines the maximum volume torus $T_{\pmb{\hat{e}}}$.

Moreover, for any fixed dimension $N$ and $d_0 \in \mathbb{R}$ sufficiently small, there exists $t >0$ such that $d(t) = d_0$. This follows by observing that the equation

$$d_0=\frac{t\sqrt{2}}{\sqrt{\sum_{i=0}^{N-1}(1+it)^2}}$$
has a positive root for

$$0 < d_0 < \frac{2 \sqrt{3}}{\sqrt{(N-1) N (2 N-1)}}.$$
\end{ex}

Although the construction above is not the best one in terms of number of points for fixed $d$, it has the advantage of being highly symmetric and having a closed form for the spherical code minimum distance.

\subsection{Curves on Each Torus}
Let $T_{\pmb{c}}$ be a torus represented by the vector $\pmb{c} \in \Sc_+$ as defined in the previous section. On the surface of $T_{\pmb{c}}$ we consider curves of the form: 
\begin{equation}
\CurvaT{c}(x) = \Toro{c}{x 2\pi \hat{\pmb{u}}},
\label{def:Curva}
\end{equation}
where $C = \mbox{diag}(c_1,\ldots,c_N)$, $\hat{\pmb{u}} = \pmb{u}C = (c_1 u_1, \ldots, c_N u_N)$, $\Phi_{\pmb{c}}$ is given by \eqref{eq:Toro} and $x \in [0,1]$.

Provided that $\pmb{u} \in \mathbb{Z}^N, \gcd(u_i) = 1$, those curves are closed (a $(u_1,\ldots,u_N)$-type knot), and due to periodicity and local isometry properties of $\Phi_{\pmb{c}}$ their lengths are $2 \pi \left\| \hat{\pmb{u}} \right\|$. They are also the image through $\Phi_{\pmb{c}}$ of the intersection between the set of lines $2 \pi W$, $W = \left\{ \hat{\pmb{u}}x + \hat{\pmb{n}} : \hat{\pmb{n}} = \pmb{n} C \mbox{ and } \pmb{n} \in \mathbb{Z}^{N} \right\}$, and the box $\PP_{\cc}$. For  $c = \hat{\pmb{e}}$, these are exactly the curves analyzed in \cite{Sueli}.

Let $r_{\pmb{c}}(\pmb{u})$ be the minimum distance between two different lines in $W$. We have:
\begin{equation}
\begin{split}
r_{\pmb{c}}(\pmb{u}) &:= \min_{\hat{\pmb{n}} \neq k \pmb{\hat{u}}, k \in \mathbb{Z}} \min_{\hat{x}, x} \left\| \hat{\pmb{u}}x - (\hat{\pmb{u}}\hat{x}+\hat{\pmb{n}}) \right\| \\
& = \min_{\hat{\pmb{n}} \neq k \pmb{\hat{u}}, k \in \mathbb{Z}} \min_{x} \left\| \hat{\pmb{u}}x - \hat{\pmb{n}} \right\| \\
& = \min_{\hat{\pmb{n}} \neq k \pmb{\hat{u}}, k \in \mathbb{Z}} \left\| P_{\hat{\pmb{u}}^{\perp}} (\hat{\pmb{n}}) \right\| \\
& = \min_{\hat{\pmb{n}} \notin \hat{\pmb{u}}^{\perp}} \left\| P_{\hat{\pmb{u}}^{\perp}} (\hat{\pmb{n}}) \right\|,
\end{split}
\end{equation}
where $P_{\hat{\pmb{u}}^{\perp}} (\hat{\pmb{n}})$ denotes the orthogonal projection of $\hat{\pmb{n}}$ onto the hyperplane $\hat{\pmb{u}}^{\perp}$ which is given by the standard projection formula

\begin{equation}
P_{\hat{\pmb{u}}^{\perp}} (\hat{\pmb{n}}) = \hat{\pmb{n}} \left(I_N - \frac{\pmb{\hat{u}}^t \pmb{\hat{u}}}{\pmb{\hat{u}} \pmb{\hat{u}}^t}\right).
\label{eq:proj}
\end{equation}

If we consider  $\Lambda_{\pmb{c}} = c_1 \mathbb{Z} \oplus \ldots \oplus c_N \mathbb{Z}$, the rectangular lattice generated by matrix $C$, then $ r_{\pmb{c}}(\pmb{u})$ is the length of shortest non-zero vector of the projection\footnote{In general, the projection of a lattice $\Lambda$ onto a subspace $H$ is \textit{not} a lattice unless certain special conditions are met, e.g., when $H^\perp$ is spanned by primitive vectors of $\Lambda$ \cite{Perfect}. This will be always the case in this paper, since $H = \hat{\pmb{u}}^\perp$ for a primitive vector $\hat{\pmb{u}}$.} of $\Lambda_{\pmb{c}}$ onto $\hat{\pmb{u}}^{\perp}$. Due to \eqref{eq:FamousRelation}, the small-ball radius $\delta_{\pmb{u},\pmb{c}}$ of $\pmb{s}_{T_{\pmb{c}}}$ can be bounded in terms of $r_{\pmb{c}} (\pmb{u})$ as follows:

\begin{equation}
2 c_{\xi} \sin \left({\frac{\pi  r_{\pmb{c}}(\pmb{u})}{2 c_{\xi}}}\right) \leq \delta_{\pmb{u},\pmb{c}} \leq 2 \sin \left({\frac{\pi r_{\pmb{c}}(\pmb{u})}{2}}\right),
\label{eq:distortionCurve}
\end{equation}
where $c_{\xi} = \displaystyle \min_{i}{c_i}$ and $c_i > 0$. Thus, for small values of $\delta_{\pmb{u}, \pmb{c}} $, we have $\delta_{\pmb{u},\pmb{c}} \approx  \pi r_{\pmb{c}}(\pmb{u})$. Our goal is to choose $\pmb{u}$ in order to maximize $r_{\pmb{c}}(\pmb{u})$. In addition, we also want to reach a contrary objective, which is the one of maximizing the arc length $l_{\pmb{u},\pmb{c}} = 2 \pi \left\|\hat{\pmb{u}} \right\|$ of $\pmb{s}_{T_{\pmb{c}}}$. 

The proportion (density) of the volume of the ``tube'' of radius $\pi r_{\pmb{c}}(\pmb{u})$ inside $\PP_{\pmb{c}}$ is precisely the packing density of the lattice  $P_{\hat{\bm{u}}^{\perp}} (\Lambda_{\bm{c}})$, the projection of $\Lambda_{\bm{c}}$ onto $\hat{\bm{u}}^{\perp}$, which is given by \cite{Perfect}:

\begin{equation}
\Delta(P_{\hat{\bm{u}}^{\perp}} (\Lambda_{\bm{c}})) = \Vo_{N-1} \left( \frac{r_{\pmb{c}}(\pmb{u})^{N-1} \left\|\hat{\pmb{u}} \right\|}{2^{N-1} \prod_{i=1}^N c_i} \right) \leq \Delta_{N-1}
\label{eq:BoundDensidade}
\end{equation}
where $\Delta_{N-1}$ is the density of the best lattice in dimension $(N-1)$ and $\Vo_{N-1}$ is the volume of the $(N-1)$-dimensional unit sphere. For the case when all entries of $\bm{c}$ are equal, $\Lambda_{\bm{c}}$ is equivalent to $\mathbb{Z}^N$ and it was shown in \cite{FatStrut} that we can make the above bound as tight as we want. We will show in Section V that this is also true for an arbitrary $\bm{c}$ i.e., that projections of the rectangular lattice $\Lambda_{\pmb{c}}$ can also yield dense lattice packings and therefore we can construct curves on the flat torus with the parameters arbitrary close to this bound.

\begin{ex} Let $N = 2$. Consider the local isometry
\begin{equation}
\Phi_{\cc} (\pmb u)=\left(c_{1} \cos \frac{u_{1}
}{c_{1}}, c_1 \sin \frac{u_{1}}{c_{1}} ,c_{2}\cos \frac{u_{2}}{c_{2}},c_2 \sin \frac{u_{2}}{c_{2}}\right)
\end{equation}
on the flat torus $T_{\pmb{c}}$ and the line segment given by   $\pmb{v}(x) = x \pmb{v} = x (2 \pi u_1 c_1, 2\pi u_2 c_2)$, $,u_1, u_2 \in \mathbb{Z} = x 2 \pi \pmb{\hat{u}}$ and $0 \leq x \leq 1$. The curve $\CurvaT{c}(x)$ will be the composition $\Phi(\pmb{v}(x))$ and we have
\begin{equation} 
r_{\pmb{c}}(\pmb{v}) \left\| \pmb{v} \right\| = 2\pi c_1 c_2 \Rightarrow  r_{\pmb{c}}(\pmb{v}) = \frac{c_1 c_2}{\sqrt{ u_1^2 c_1^2+ u_2^2 c_2^2}}
\end{equation}

This curve in $\mathbb{R}^4$ turns around $u_1$-times the circle obtained by its projection on the first two coordinates, whereas turning around $u_2$-times the circle of radius $c_2$ given by its last two coordinates (a type $(u_1,u_2)$ knot on the flat torus $T_{\pmb{c}}$). In this case, it is easy to calculate $r_{\pmb{c}}(\pmb{u})$ since the density of the projection lattice is equal to one. In Figure \ref{fig:encodingprocess} it is illustrated the curve $(u_1,u_2)=(4,5)$, with $c_1 > c_2$.
\end{ex}

\subsection{Encoding}
\label{subsec:Enc}
Let $T = \left\{ T_1, \ldots, T_M \right\}$ be a collection of $M$ tori as designed in Section \ref{sec:torusLayers}. For each one of these tori, let $\CurvaT{k}(x) = \Phi(x 2\pi \hat{\pmb{u}}_k), (k=1,2,\ldots,M)$ be the curve on $T_k$, determined by the vector $\hat{\pmb{u}}_k$ \eqref{def:Curva} and  consider $L = \sum_{j=1}^M L_j$, where $L_k$ is the length of $\CurvaT{k}$.

Now split the unit interval $[0,1]$, excluding one of the endpoints, into $M$ pieces according to the length of each curve:
\begin{equation*}
\left[0,1\right) = I_1 \cup I_2, \ldots \cup I_M\mbox{, where}
\end{equation*}
\begin{equation*}
I_k = \left[\frac{\sum_{j=1}^{k-1} L_j}{L},\frac{\sum_{j=1}^{k} L_j}{L} \right), \mbox{ for } k = 1, \ldots, M.
\end{equation*}
and consider the bijective mapping
$$\begin{array}{cc}
f_k: I_k  \to [0,1) \\
f_k(x) = \displaystyle\frac{x - \sum_{j=1}^{k-1} L_j/L}{L_k/L}.
\end{array}
$$
Then the full encoding map $\pmb{s}$ can be defined by
\begin{equation}
\pmb{s}(x) := \CurvaT{k}(f_k(x)), \mbox{ if } x \in I_k.
\label{eq:encoding}
\end{equation}
 The stretch of $\pmb{s}$ is constant and equal its total length $L$ and the small-ball radius of $\pmb{s}$ is the minimum small-ball radius $\delta$ of the curves $\CurvaT{k}$, provided that the distance between any pair of torus in $T$ is at least $2\delta$.


To encode a value $x$ within $[0,1]$ we apply the map \eqref{eq:encoding}, with a scaling factor of $\sqrt{P}$ such that the average transmission power is $P$. The signal locus will be a set of $M$ closed curves, each one lying on a torus layer $T_k$ and defined by a vector $\pmb{u}_k$. This whole process is illustrated in Figure \ref{fig:encodingprocess}.

\begin{figure}[!htb]
\centering
\includegraphics[scale=0.7]{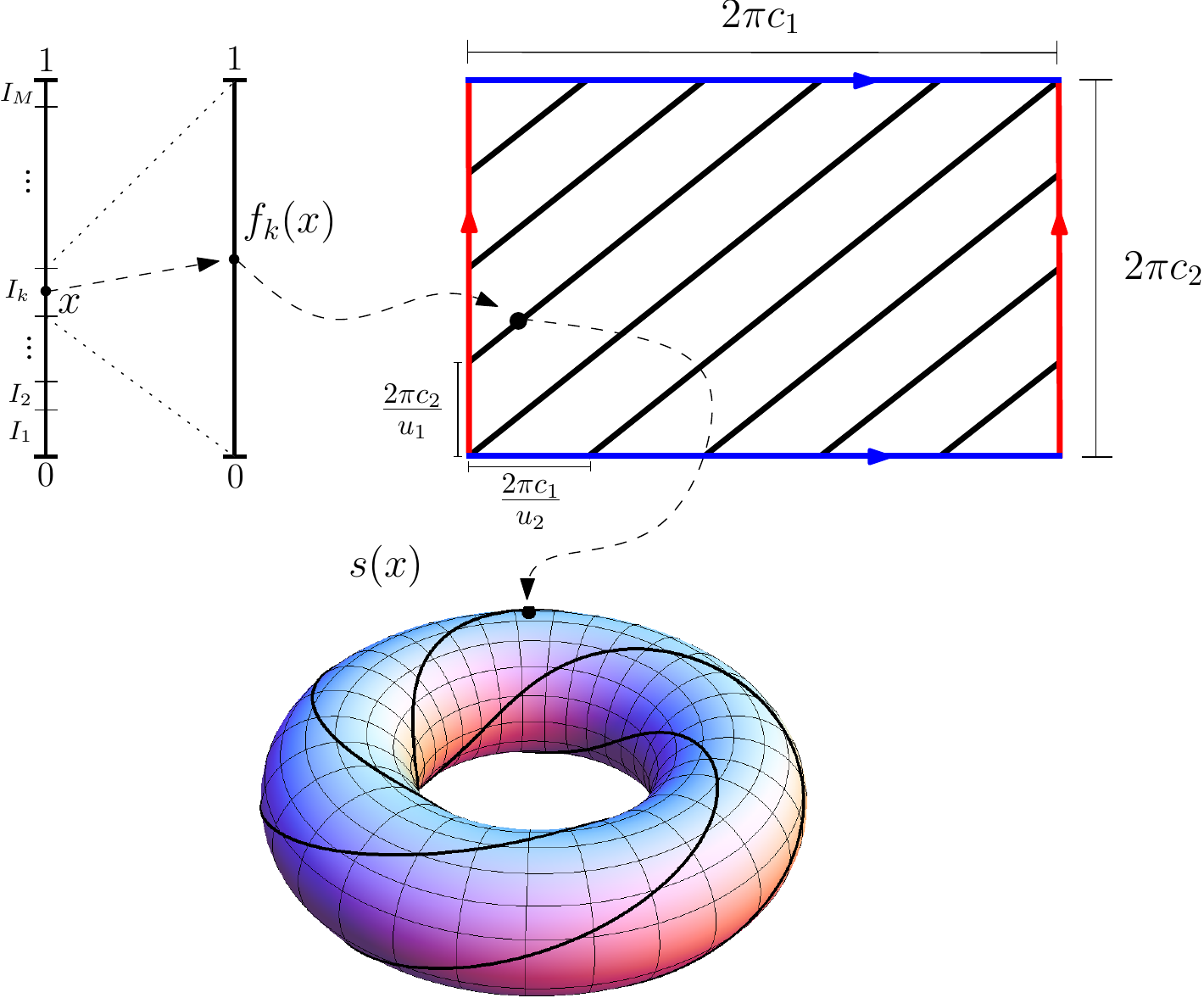}
\caption{Encoding Process}
\label{fig:encodingprocess}
\end{figure}

If the source is uniformly distributed over $[0,1]$, the encoding process presented above is a proper one, since all subintervals are equally stretched (see Section \ref{sec:mse}). For other applications, however, it could be worth considering another partition.
\subsection{Decoding}
Given a received vector $\pmb{y} \in \mathbb{R}^{2N}$, the maximum likelihood decoding is finding $\hat{x}$ such that:

$$\hat{x} = \mbox{arg}\min_{x \in [0,1]} \left\|\pmb{y} - \pmb{s}(x)\right\|.$$

Since exactly solving this problem is computationally expensive we focus on a suboptimal decoder.

For $ 0 \neq \gamma_i = \sqrt{y_{2i-1}^2+y_{2i}^2} $, we can write
{\small
\begin{eqnarray}
\pmb{y} & = &  \left( \gamma_1 \left( \frac{y_1}{\gamma_1}, \frac{y_2}{\gamma_1} \right), \hdots , \gamma_N \left( \frac{y_{2N-1}}{\gamma_N}, \frac{y_{2N}}{\gamma_N} \right)
 \right) \nonumber \\
\pmb{y} & = &  \left( \gamma_1\left(\cos{\frac{\theta_{1}}{\gamma_1}}, \sin{\frac{\theta_{1}}{\gamma_1}} \right), \hdots ,
\gamma_N \left(\cos{\frac{\theta_{N}}{\gamma_N}},\sin{ \frac{\theta_{N}}{\gamma_N}} \right) \right), \nonumber 
\end{eqnarray}
}
where,
 \begin{eqnarray}
  \theta_{i} & = & \arccos{ \left( \dfrac{y_{2i-1}}{\gamma_i} \right) } \gamma_i,  \ \ 1 \leq i \leq N \nonumber.
 \end{eqnarray}

The process of finding the closest layer involves a $N$-dimensional spherical decoding of $\pmb{\gamma}=(\gamma_1,\gamma_2, \cdots, \gamma_N)$, which has complexity $  O(MN)$.

Let $\pmb{c}_i = (c_{i1}, c_{i2}, \cdots, c_{iN})$ be the closest point in $\in \Sc_+$ to $\pmb{\gamma}$ and
$$\bar{\pmb{y}}_i = \left( c_{i1}\left(\cos{\frac{\theta_{1}}{\gamma_1}}, \sin{\frac{\theta_{1}}{\gamma_1}} \right), \hdots ,
c_{iN} \left(\cos{\frac{\theta_{N}}{\gamma_N}},\sin{ \frac{\theta_{N}}{\gamma_N}} \right) \right)$$ be the projection of $\pmb{y}$ in the torus $T_{c_i}$, i.e., 
$$
||\pmb{y}-\bar{\pmb{y}}_i|| \leq || \pmb{y} - \pmb{v} || \, ,  \forall \, \pmb{v} \in T_{c_i}.
$$
From now on, we proceed the process by using a slight modification of the torus decoding algorithm \cite{Sueli} applied to the $N$-dimensional hyperbox $\mathcal{P}_{\cc_i}$. The complexity of this algorithm is given by $O(N \left\| \pmb{u}_i \right\|_1)$, where $\pmb{u}_i$ is the vector that determines the curve $\CurvaT{i}$. Hence, if $M = O( \max_i{\left\| \pmb{u}_i \right\|_1)}$, the overall complexity of the process described in this section is $O(N \max_i{\left\| \pmb{u}_i \right\|_1)}$, the same as for the torus decoding.

\section{A Scaled Lifting Construction}
\subsection{The construction}
The \textit{Lifting Construction} was proposed in \cite{FatStrut} as a solution to the problem of finding dense lattices which are equivalent to orthogonal projections of $\mathbb{Z}^N$ along integer vectors (``fat-strut'' problem). In this section we adapt that strategy in order to construct projections of the lattice $\Lambda_{\pmb{c}}$ which approximate any $(N-1)$ dimensional lattice (hence the densest one) with the objective of finding curves in $\mathbb{R}^N$ approaching the bound \eqref{eq:BoundDensidade}. We adopt here the lattice terms as in \cite{SloaneLivro}. For our purposes, the proximity measure for lattices will be the distance between their Gram matrices, as in \cite{FatStrut}. This notion measures how close a lattice is to another one up to congruence transformations (rotations or reflections).

We consider the dual of a lattice $\Lambda \in \mathbb{R}^N$
$$\Lambda^* = \{ \pmb{x} \in \mbox{span}(\Lambda): \langle \pmb{x}, \pmb{y} \rangle \in \mathbb{Z} \,\, \forall \pmb{y} \in \Lambda\},$$ where $\mbox{span}(\Lambda)$ is the subspace spanned by a basis of $\Lambda$. Now let $\Lambda_{\pmb{c}} = c_1 \mathbb{Z} \oplus \ldots \oplus c_N \mathbb{Z}$ be the rectangular lattice generated by the diagonal matrix $C$ (and with Gram matrix $CC^t = C^2$). By scaling $\Lambda_{\pmb{c}}$, we may assume that $c_1 = 1$. With this condition, if $P_{\pmb{u}^{\perp}}(\Lambda_{\pmb{c}})^*$ denotes the dual of the projection of $\Lambda_{\pmb{c}}$ through a vector $\hat{\pmb{u}}=(1, u_2 c_2 \ldots, u_N c_N)$ ($u_i \in \mathbb{Z}$), then a generator matrix for $P_{\hat{\pmb{u}}^{\perp}}(\Lambda_{\pmb{c}})^*$ is given by

\begin{equation}
M = \begin{pmatrix}-u_2 & 1/c_2 & 0 & \ldots & 0 \\
-u_3 & 0 & 1/c_3 & \ldots & 0 \\
\vdots & \vdots & \vdots & \ddots & \vdots \\
-u_n & 0 & 0 & \ldots & 1/c_n.
\end{pmatrix}
\label{matrizBoa},
\end{equation}
what can be derived as a consequence of Prop. 1.3.4 \cite{Perfect}. In what follows we derive a general construction of projections such that $P_{\pmb{u}^{\perp}}$ is arbitrarily close to a target lattice $\Lambda \in \mathbb{R}^{N-1}$.

\begin{teo} Let $\Lambda_{\pmb{c}} = \mathbb{Z} \oplus c_2 \mathbb{Z} \ldots \oplus c_N \mathbb{Z}$, $c_i \in \mathbb{R}$. Let $\Lambda$ be a target lattice in $\mathbb{R}^{N-1}$ and consider a lower triangular generator matrix  $L^* = (l_{ij}^*)$ for $\Lambda^*$. If $\Lambda_w^*$, $w \in \mathbb{N}$ is the sequence of lattices generated by the matrices

\begin{equation}
L_w^* := \left(\begin{array}{ccccc}
\lfloor{w l_{11}^*}\rfloor & \frac{1}{c_2} & \ldots & \ldots & 0 \\
\lfloor{w l_{21}^*} \rfloor & \frac{\lfloor{w l_{22}^* c_2}\rfloor }{c_2} & \ldots & \ldots & 0 \\
\vdots & \vdots & \ddots & \ldots & 0 \\
\lfloor{w l_{n1}^*} \rfloor & \frac{\lfloor{w l_{n2}^* c_2} \rfloor }{c_2} & \ldots & \frac{\lfloor{c_{n-1} w l_{nn}^*} \rfloor}{c_{n-1}} & \frac{1}{c_n}
\end{array}\right),
\label{matrizCons}
\end{equation}
then:
\begin{enumerate}
\item[(i)] $L_w^* = P_{\hat{\pmb{u}}^{\perp}}(\Lambda)^*$ for some $\hat{\pmb{u}} \in \mathbb{R}^N$ and
\item[(ii)] $(1/w^2) L_w^* L_w^{*t} \to L^* L^{*t}$ as $w \to \infty$.
\end{enumerate}
\end{teo}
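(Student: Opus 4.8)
The plan is to verify the two claims in order, relying on the structure of the matrix \eqref{matrizBoa} that describes the dual of a projected rectangular lattice. The key observation is that \eqref{matrizCons} has exactly the shape of \eqref{matrizBoa}, except that the integer entries $-u_i$ of the first column are replaced by $\lfloor w l_{i1}^*\rfloor$ and the ``$1/c_j$'' entries further along each row are replaced by $\lfloor c_{j-1} w l_{ij}^*\rfloor / c_{j-1}$, i.e.\ by a rational multiple of $1/c_{j-1}$ whose numerator is an integer. So for part (i) I would first show that $L_w^*$ is, up to the obvious reindexing, the generator matrix of the dual of a projection of some rescaled rectangular lattice. More precisely, I would exhibit the integer vector $\pmb{u}$ and the scaling that make it work: reading off \eqref{matrizBoa} and \eqref{matrizCons} entrywise, one sees that $L_w^*$ is a generator matrix for $P_{\hat{\pmb{u}}^\perp}(\Lambda_{\pmb{c}'})^*$ where the $c_j'$ are adjusted so that the lower–triangular fractional entries $\lfloor w l_{ij}^* c_{j-1}\rfloor/c_{j-1}$ become the required $1/c_j'$ with integer ``$-u_i$'' multiples in front — the point being that the lower triangle of a lower–triangular $L^*$ only has to reproduce the pattern of \eqref{matrizBoa} after the change of lattice. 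Then invoking Prop.~1.3.4 of \cite{Perfect} exactly as in the derivation of \eqref{matrizBoa}, the lattice generated by $L_w^*$ is the dual of the projection of $\Lambda_{\pmb{c}'}$ along a primitive integer direction $\hat{\pmb{u}}$, which is what (i) asserts (the footnote in Section IV guarantees this projection is a genuine lattice, since $\hat{\pmb{u}}$ is primitive).

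For part (ii), the strategy is a direct estimate: the Gram matrix $(1/w^2)L_w^* L_w^{*t}$ has $(i,k)$ entry equal to $(1/w^2)$ times the inner product of row $i$ and row $k$ of $L_w^*$, and each such inner product is a sum of products of entries of the form $\lfloor w a\rfloor$ or $\lfloor w a c\rfloor/c$. Since $\lfloor w a\rfloor = w a + O(1)$ uniformly, dividing a row of $L_w^*$ by $w$ gives a row that converges entrywise to the corresponding row of $L^*$ (the ``$1/c_n$'' entry in the last column of \eqref{matrizCons}, which is not floored, contributes $O(1/w)\to 0$ after scaling, matching the fact that $L^*$ is lower triangular and hence has no such term in that slot — here I would double–check the indexing conventions in \eqref{matrizCons} so the limiting matrix is exactly $L^*$). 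Therefore $(1/w)L_w^* \to L^*$ entrywise, and since matrix multiplication and transposition are continuous, $(1/w^2)L_w^* L_w^{*t} = \bigl((1/w)L_w^*\bigr)\bigl((1/w)L_w^*\bigr)^t \to L^* L^{*t}$.

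I expect the main obstacle to be part (i): making precise the bookkeeping that identifies \eqref{matrizCons} with an instance of \eqref{matrizBoa} after rescaling the $c_j$'s, and in particular checking that the resulting direction vector $\hat{\pmb{u}}$ has integer entries so that Prop.~1.3.4 of \cite{Perfect} applies and the projection is a lattice. The appearance of the floor functions both in the first column (giving the integer $u_i$'s) and inside the lower–triangular denominators (forcing a compatible rescaling $c_j \mapsto c_j'$ that depends on $w$) is the delicate point; everything else is a routine continuity argument. One should also note that the target lattice $\Lambda$ enters only through the chosen lower–triangular generator $L^*$ of its dual, and that such an $L^*$ always exists (Cholesky/LLL–type reduction), so there is no loss of generality in assuming the triangular form.
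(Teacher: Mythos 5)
Your part (ii) is exactly the paper's argument: $(1/w)L_w^* \to [\,L^* \;\; \pmb{0}\,]$ entrywise, and continuity of the map $A \mapsto AA^t$ finishes it. The problem is part (i), which you yourself flag as the delicate point but then propose to handle in a way that cannot work. A matrix of the form \eqref{matrizBoa} has only two nonzero entries per row (the integer $-u_i$ in the first column and a single $1/c_j$), whereas row $i$ of \eqref{matrizCons} generically has $i+1$ nonzero entries; no rescaling $c_j \mapsto c_j'$ can make the two matrices agree entrywise, so ``reading off \eqref{matrizBoa} and \eqref{matrizCons} entrywise'' is not an available move. Moreover, a $w$-dependent rescaling of the $c_j$ would change which rectangular lattice is being projected, which is not what the theorem asserts.

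The missing idea --- and essentially the entirety of the paper's proof of (i) --- is that one passes from \eqref{matrizCons} to \eqref{matrizBoa} by elementary \emph{integer} row operations, which preserve the generated lattice. This works because every lower-triangular entry of $L_w^*$ in a given column is by construction an integer multiple of the pivot $1/c_j$ already present in an earlier row of that column; subtracting the appropriate integer multiples of those earlier rows clears the whole lower triangle while perturbing only the first-column entries, and only by integers. The result is a matrix exactly of form \eqref{matrizBoa} with the \emph{same} $c_j$ and with integers $u_2,\ldots,u_n$ depending on $w$, so $L_w^*$ generates $P_{\hat{\pmb{u}}^{\perp}}(\Lambda_{\pmb{c}})^*$ for the primitive vector $\hat{\pmb{u}}=(1,u_2c_2,\ldots,u_Nc_N)$ of $\Lambda_{\pmb{c}}$ (note $\hat{\pmb{u}}$ is primitive in $\Lambda_{\pmb{c}}$, not an integer vector, which is what the footnote in Section IV actually requires for the projection to be a lattice). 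With this replacement for your first step, the rest of your write-up goes through.
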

\begin{proof} Applying elementary (integer) operations on $L_w^*$ we can put it on form \eqref{matrizBoa} for some integers $u_2,\ldots,u_n$ depending on $w$, hence $L_w^*$ is a generator matrix for $P_{\hat{\pmb{u}}^{\perp}}(\Lambda)^*$, proving the first statement.
For the second statement, we clearly have $(1/w) L_w^* \to \left[ L^* \,\,\,\,\, \pmb{0} \right]$ as $w \to \infty$, where $\pmb{0}$ is the $(n-1) \times n$ all-zero column vector. Therefore, $(1/w^2) L_w^* L_w^{*t} \to \left[ L^* \,\,\,\,\, \pmb{0} \right]\left[ L^* \,\,\,\,\, \pmb{0} \right]^t = L^* L^{*t}$.
\end{proof}

Since the density of a lattice is a continuous function of the entries of its Gram matrix, it follows that the sequence of curves produced by the vectors $\pmb{\hat{u}}$ approaches the bound \eqref{eq:BoundDensidade} as $w \to \infty$.

\begin{ex} Consider the hexagonal lattice \cite{SloaneLivro}, which is the best packing in two dimensions and is equivalent to its dual. One of its generator matrix is

$$L^* = \left(\begin{array}{cc} 1 & 0 \\ \frac{1}{2} & \frac{\sqrt{3}}{2} \end{array} \right).$$

We apply the construction above and reduce, through elementary operations, the matrix $L_w^*$ in order to put it on form \eqref{matrizBoa}. After re-scaling the rectangular lattice $\Lambda_{\pmb{c}}$, we find the sequence of vectors

$$\hat{\pmb{u}}_w = (c_1, -2wc_2, (2w \lfloor w \sqrt{3} c_2/c_1 \rfloor - w) c_3).$$ 
The projections of $c_1 \mathbb{Z} \oplus c_2 \mathbb{Z} \oplus c_3 \mathbb{Z}$ onto $\hat{\pmb{u}}_w^\perp$ will be, up to equivalence, arbitrarily close to $A_2$ when $w \to \infty$.
\end{ex}

\subsection{Comparisons: Curves in $\mathbb{R}^6$}
\label{exemplo}
Here we compare our approach of constructing curves on torus layers with the curves constructed in \cite{FatStrut}  and \cite{Sueli} in terms of length for given small-ball radii. Given $\delta > 0$, we first consider a set of flat tori associated to a spherical code in $\mathbb{R}^3$, with minimum distance greater than $2 \delta$, as described in Section \ref{sec:torusLayers}. Through the first inequality of \eqref{eq:distortionCurve}, for each torus, we can find $r_{\pmb{c}}$ in order to guarantee that the curves on the flat tori will have small-ball radius at least $\delta$ (this is also done in the case of the curves on the torus $c = \pmb{\hat{e}}$ used in \cite{Sueli}). We then look for the larger element of the sequence of vectors that produces a projection with minimum distance at least $r_{\pmb{c}}$. 

\begin{figure}[htb!]
\centering
\includegraphics[scale=0.50]{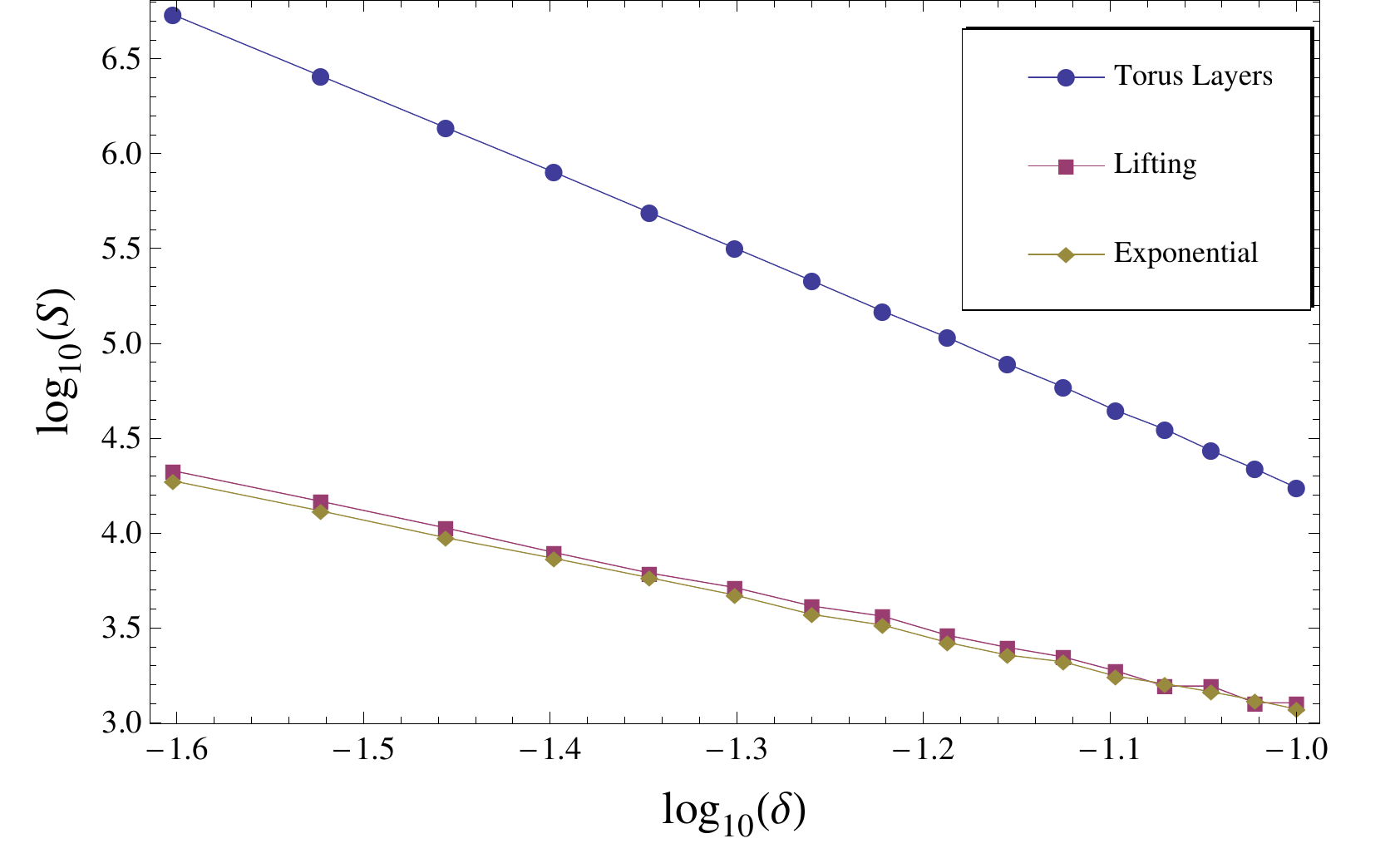}
\caption{Comparison between diferent approaches in terms of small-ball radius $\delta$ and arc-length $\mathcal{S}$. }
\label{grafico}
\end{figure}

In Figure \ref{grafico}, the first curve from the bottom to the top represents the exponential sequence \cite{Sueli}. The second one is obtained by directly applying the Lifting Construction \cite{FatStrut} to the hexagonal lattice and the last one displays the total length associated to our scheme.

\section{Mean Squared Error Analysis}
\label{sec:mse}
The first step of the encoding process described in Section \ref{subsec:Enc} involves a choice of partition of the interval $[0,1]$. If the source to be transmitted is uniformly distributed over $[0,1]$, it is intuitive that the best choice in terms of mse is the one such that the encoding map will have constant stretch. This fact is formalized in the next theorem.

\begin{teo}Let $I_1,\ldots,I_M$ be a partition of the interval $[0,1)$. Let $\{\pmb{s}_1,\ldots,\pmb{s}_M\}$ be a collection of non-overlapping curves on $\mathcal{S}^{2N-1}$ of length $L_1,\ldots,L_M$ and domain $[0,1]$. Suppose that the encoding rule is $\pmb{s}(x) = \pmb{s}_j(f_j(x)), \mbox{ if } x \in I_j,$
where  $f_j:I_j \to [0,1]$ is the linear function that takes $I_j$ to $[0,1]$. Then, under the low noise regime, the minimum $\epsilon^*$ of $E_{\small \mbox{low}}[(X-\hat{X})^2]$ over all partitions $\{I_1,\ldots,I_m\}$ is achieved when

\begin{equation}|I_j| = \displaystyle \dfrac{L_j}{\displaystyle \sum_{j=1}^M L_j} = \frac{L_j}{L}\label{eq:optimalPartition}\end{equation}
and satisfies $\epsilon^* = \sigma^2/L^2$.
\end{teo}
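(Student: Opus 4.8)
The plan is to reduce the optimization over all partitions to a finite-dimensional convex problem on the probability simplex and then dispatch it with an elementary inequality.

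First I would compute the stretch of the full encoding map $\pmb{s}$ on each piece $I_j$. Writing $I_j=[a_j,a_j+|I_j|)$, the linear map $f_j$ has constant derivative $1/|I_j|$, and since each $\pmb{s}_j$ is parametrized with constant speed $\|\dot{\pmb{s}}_j\|\equiv L_j$ (as happens for the curves $\CurvaT{k}$ built in Section IV, being images of straight segments under the local isometry $\Phi_{\pmb{c}}$), the chain rule gives $\|\dot{\pmb{s}}(x)\| = L_j/|I_j|$ for every $x$ in the interior of $I_j$; in particular $\pmb{s}$ has piecewise-constant stretch. (If the $\pmb{s}_j$ were not constant-speed, Cauchy--Schwarz still gives $\int_0^1\|\dot{\pmb{s}}_j(t)\|^{-2}\,dt\ge 1/L_j^2$, so the bound below persists and is attained precisely by constant-speed parametrizations.)

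Next I would substitute this into \eqref{eq:lownoise} with $p\equiv 1$. Splitting the integral over the $I_j$ and using $\|\dot{\pmb{s}}(x)\|^{-2}=|I_j|^2/L_j^2$ on $I_j$ yields
\[
E_{\small\mbox{low}}[(X-\hat X)^2] \;=\; \sigma^2\sum_{j=1}^{M}\frac{|I_j|^2}{L_j^2}\,|I_j| \;=\; \sigma^2\sum_{j=1}^{M}\frac{|I_j|^3}{L_j^2}.
\]
Hence, writing $p_j=|I_j|$, the problem becomes: minimize $g(p_1,\dots,p_M)=\sum_{j}p_j^3/L_j^2$ subject to $p_j\ge 0$ and $\sum_j p_j=1$.

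Then I would solve this last problem. A clean route is H\"older's inequality with exponents $3$ and $3/2$:
\[
1=\sum_{j=1}^{M}p_j=\sum_{j=1}^{M}\bigl(p_j L_j^{-2/3}\bigr)L_j^{2/3}\le \Bigl(\sum_{j=1}^{M}\frac{p_j^3}{L_j^2}\Bigr)^{1/3}\Bigl(\sum_{j=1}^{M}L_j\Bigr)^{2/3},
\]
i.e. $g(p)\ge 1/L^2$, with equality iff $p_j^3/L_j^2$ is proportional to $L_j$, that is, iff $p_j\propto L_j$, forcing $p_j=L_j/L$. Equivalently, each summand $t\mapsto t^3/L_j^2$ is convex, so $g$ is convex on the simplex and the unique stationary point of the Lagrangian, given by $3p_j^2/L_j^2=\lambda$ and hence $p_j=L_j/L$, is the global minimum. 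Evaluating at $p_j=L_j/L$ gives $g=\sum_j L_j/L^3=1/L^2$, so $\epsilon^*=\sigma^2/L^2$, as claimed. I do not expect a real obstacle here: the only point needing care is the standing assumption that the pieces $\pmb{s}_j$ are constant-speed — which holds for the curves produced by the Torus Layer Scheme — so that $\pmb{s}$ restricted to each $I_j$ has constant stretch; once that is in place, everything reduces to the convex optimization above.
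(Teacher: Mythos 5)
Your proposal is correct and follows essentially the same route as the paper: both reduce the problem to minimizing $\sigma^2\sum_j |I_j|^3/L_j^2$ over the simplex $\sum_j |I_j|=1$ and conclude with a convexity argument (the paper applies Jensen's inequality to $x\mapsto x^3$ with weights $L_j/L$, which is interchangeable with your H\"older/Lagrange finish and yields the same equality condition $|I_j|\propto L_j$). Your explicit derivation of the per-piece stretch $L_j/|I_j|$ and the Cauchy--Schwarz remark about non-constant-speed parametrizations make precise a step the paper simply asserts, but the substance is the same.
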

\begin{proof}First note that 
\begin{equation*} \begin{split} E_{\small \mbox{low}}[(X-\hat{X})^2] &= \sum_{j=1}^M E_{\small \mbox{low}}[(X-\hat{X})^2 | X \in I_j] P(X \in I_j) \\
&= \sum_{j=1}^M E_{\small \mbox{low}}[(X-\hat{X})^2 | X \in I_j] |I_j| \end{split}
\end{equation*}
and if the noise is small, $E_{\small \mbox{low}}[(X-\hat{X})^2 | X \in I_j] = \sigma^2 |I_j|^2/L_j^2$. Therefore

\begin{equation*}
\begin{split}
E_{\small \mbox{low}}[(X-\hat{X})^2] &= \sigma^2 \sum_{j=1}^M \frac{|I_j|^3}{L_j^2} = \sigma^2 L \sum_{j=1}^M \left(\frac{|I_j|}{L_j}\right)^3 \left(\frac{L_j}{L}\right) \\& \stackrel{(a)}{\geq} \sigma^2 L   \left( \sum_{j=1}^M\frac{|I_j|}{L}\right)^3 =\frac{\sigma^2}{L^2},
\end{split}
\end{equation*}
where (a) is due to the strict convexity of the function $g(x) = x^3$ for $x > 0$. Equality in (a) only holds when $|I_j|/L_j = |I_k|/L_k, \forall k,j$. This condition, together with $\sum_{j=1}^n |I_j| = 1$, implies Equation \eqref{eq:optimalPartition}.

\end{proof}

It was proved in \cite{Sueli} that for fixed $N$ and $\sigma^2$, the mse of both the ``spherical code'' and the ``shift-map'' schemes decay, with respect to the channel power, with order $O(1/P^N)$, provided that the parameter $a$ \cite[Sec. VI. A]{Sueli} is chosen suitably. This means that, although the surface of the sphere is a $(2N-1)$-dimensional object, the mse exponent with respect to the channel power is roughly half of these dimensions for the spherical code. In what follows we present two results:

\begin{enumerate} 
\item A constructive scheme based on Example \ref{exemplo} that increases of $N!$ the lengths of the scheme proposed in \cite{Sueli}, even with the improvements of the Lifting Construction \cite{FatStrut}. This will yield an asymptotic behavior comparable to $O(1/P^N)$, but with a better performance when $N$ increases.
\item A non-constructive asymptotic argument showing that it is possible to ``recover'' the $(N-1)$-dimensions lost by mapping the $N$-dimensional shift-map to a sphere in $\mathbb{R}^{2N}$ and achieve decay order of $O(1/P^{-(2N-1)})$.
\end{enumerate}

For the first result, let $N$ be fixed and let $\pmb{c}(t)$ be given by Equation \eqref{eq:torosPerm}. Recall that $\pmb{\hat{e}} = 1/\sqrt{N} (1,\ldots,1)$. For $\rho > 0$ sufficiently small, there is $t$ such that the tori in the set $\mathcal{S} \mathcal{C}_{c(t)}$ have minimum distance $2 \rho$. Moreover, there is $\hat{\pmb{u}} \in \Lambda_{\pmb{c}(t)}$ arbitrarily close to the bound \eqref{eq:BoundDensidade}. The same vector $\pmb{u}$ can be used for any $\pmb{c} \in \Sc_{\pmb{c}(t)}$ just by interchanging its coordinates, yielding the same parameters for all tori. Since we have $N!$ tori in the set $\Sc_{\pmb{c}(t)}$, the total length $L_{TL}$ produced is given by:


\begin{equation} L_{TL}  = (N!)L_{\pmb{c}(t)} = N! \,\, \frac{(2\pi)^{N}}{\rho^{N-1}} {\prod_{i=1}^N c_i(t)}\left( \frac{\Delta_{N-1}}{\Vo_{N-1}} - \varepsilon_1 \right)
\end{equation}

For the Lifting Construction and same small-ball radius, we have

$$L_{LC} = L_{\pmb{\hat{e}}} = 2 \pi {\left\|\hat{\pmb{u}} \right\|} = \frac{(2\pi)^{N}}{\rho^{N-1} N^{N/2}}  \left( \frac{\Delta_{N-1}}{\Vo_{N-1}} - \varepsilon_2 \right).$$

As $\rho \to 0$, we can make both $\varepsilon_1$ and $\varepsilon_2$ vanish, and then

$$\frac{L_{TL}}{L_{LC}} \to N! \,\, N^{N/2} {\prod_{i=1}^N c_i(t)}.$$

Now, by making $t$ sufficiently small (but keeping the small-ball radius greater than $\rho$), we have

$$\frac{L_{TL}}{L_{LC}} \to N! \mbox{ as } t \to 0.$$

In other words, 

$$E_{\small \mbox{low}}^{TL}[(X-\hat{X})^2] \approx \frac{1}{(N!)^2} E_{\small \mbox{low}}^{LC}[(X-\hat{X})^2],$$
where the approximation is good when we allow the small-ball radius to be small (this is the case when the SNR is high).

The next theorem states the asymptotic improvements in terms of ``recovering'' the $N-1$ dimensions lost by the spherical code scheme in \cite{Sueli}. Its proof, however, is non-constructive. The intuitive idea is that the spherical uses only one torus (an $N$-dimensional object) on the surface of a sphere, while the whole sphere can be filled through the foliation by flat tori.

\begin{teo}
There exists a choice of parameters such that the mean squared error of the Torus Layers Scheme decays with order $O(P^{-{(2N-1)+\mu}})$ for any arbitrarily small $\mu > 0$.
\label{eq:exponentOrder}
\end{teo}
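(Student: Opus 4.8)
The plan is to control the two pieces $E[(X-\hat X)^2\mid E_\rho]\,P(E_\rho)$ and $E[(X-\hat X)^2\mid E_\rho^{c}]\,P(E_\rho^{c})$ of the mse (the decomposition used in Section II), after letting the common small-ball radius $\delta$ of the curves in the scheme shrink with $P$ at a tuned rate. Write the transmitted locus as $\sqrt P\,\pmb s([0,1))$ with $\pmb s$ on the unit sphere $S^{2N-1}$ and total length $L$, and take $\rho=\sqrt P\,\delta$ in $E_\rho=\{\|\pmb z\|<\rho\}$. On $E_\rho$ the tangent-line (low-noise) analysis of \cite{Sakrison}, together with the constant-stretch property guaranteed by Theorem 2 and equation \eqref{eq:optimalPartition}, gives $E[(X-\hat X)^2\mid E_\rho]\le(1+o(1))\,\sigma^{2}/(PL^{2})$; on $E_\rho^{c}$ we keep only the trivial bound $(X-\hat X)^{2}\le 1$, so that contribution is at most $P(\|\pmb z\|\ge\sqrt P\,\delta)$, which a standard $\chi^{2}_{2N}$ tail estimate bounds by $\mathrm{poly}(\sqrt P\,\delta/\sigma)\,\exp(-P\delta^{2}/(2\sigma^{2}))$.

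First I would show how large $L$ can be made for a prescribed $\delta$. Fix a constant $\epsilon_0>0$ and work inside the compact cap $\{\cc\in S^{N-1}:c_i\ge\epsilon_0\ \text{for all }i\}$; a maximal $2\delta$-separated subset of this cap is a spherical code $\Sc_+$ with coordinates $\ge\epsilon_0$, minimum distance $\ge 2\delta$, and cardinality $M=M(\delta)=\Theta(\delta^{-(N-1)})$. By \eqref{eq:distdoistoros} the corresponding flat tori are pairwise at distance $\ge 2\delta$. On each $T_{\cc}$, $\cc\in\Sc_+$, I would invoke the Scaled Lifting Construction (Theorem 1) together with the packing bound \eqref{eq:BoundDensidade}: for $\delta$ small it produces a vector $\hat{\pmb u}$ with $r_{\cc}(\pmb u)$ as close as we wish to $\delta/\pi$, so by the left inequality of \eqref{eq:distortionCurve} and $c_\xi\ge\epsilon_0$ the resulting curve has small-ball radius $\ge\delta$, while by \eqref{eq:BoundDensidade} and $\prod_i c_i\ge\epsilon_0^{N}$ its length $2\pi\|\hat{\pmb u}\|$ is at least $c_0\,\delta^{-(N-1)}$ for a constant $c_0=c_0(N,\epsilon_0)>0$ independent of $\delta$ and of the particular torus. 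Concatenating these $M(\delta)$ curves as in \eqref{eq:encoding} yields $L=\sum_k L_k\ge c_0\,M(\delta)\,\delta^{-(N-1)}\ge c_0'\,\delta^{-2(N-1)}$ (which is also order-optimal, since a radius-$\delta$ tube along a length-$L$ curve on $S^{2N-1}$ has volume $\asymp L\delta^{2(N-1)}$ and must fit in the sphere).

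Combining the two estimates, $\text{mse}\le C\,\delta^{4(N-1)}/P+\mathrm{poly}(\sqrt P\,\delta/\sigma)\,\exp(-P\delta^{2}/(2\sigma^{2}))$. I would then optimize over $\delta$ by setting $\delta^{2}=K\sigma^{2}(\log P)/P$ with $K$ a sufficiently large constant: the exponential term becomes $\mathrm{poly}(\log P)\,P^{-K/2}=O(P^{-(2N-1)})$, and the first term equals $C\,(K\sigma^{2})^{2(N-1)}(\log P)^{2(N-1)}\,P^{-(2N-1)}=O(P^{-(2N-1)+\mu})$ for every $\mu>0$, since any fixed power of $\log P$ is $O(P^{\mu})$. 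Since $\delta\to0$ and $\sqrt P\,\delta=\sqrt{K\sigma^{2}\log P}\to\infty$, the low-noise analysis and the tail estimate are legitimate for $P$ large, the tori and curves of the second paragraph exist once $\delta$ is small enough, and the claimed decay order $O(P^{-(2N-1)+\mu})$ follows.

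The step I expect to be the main obstacle is the uniformity in the second paragraph: making the per-torus length constant $c_0$ independent both of the particular torus and of $\delta\to0$. Theorem 1 only yields curves \emph{approaching} the bound \eqref{eq:BoundDensidade} as the auxiliary integer $w\to\infty$, so one must produce, for each $\cc\in\Sc_+$ and each small $\delta$, an explicit admissible $w=w(\cc,\delta)$ for which $r_{\cc}(\pmb u)$ is at least a fixed fraction of $\delta$ while simultaneously $\|\hat{\pmb u}\|\gtrsim\delta^{-(N-1)}$; the margin $c_i\ge\epsilon_0$ is precisely what keeps the prefactor $\prod_i c_i$ in \eqref{eq:BoundDensidade} and the factor $c_\xi$ in \eqref{eq:distortionCurve} bounded away from the degenerate regime as the spherical code is refined. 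By contrast, the $\chi^{2}$ tail bound, the convexity argument already appearing in the proof of Theorem 2, and the final one-parameter optimization are routine. This argument is non-constructive only through the packing lower bound on $|\Sc_+|$ and the unspecified near-optimal $(N-1)$-dimensional lattices feeding \eqref{eq:BoundDensidade}; everything else is effective.
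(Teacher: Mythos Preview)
Your proposal is correct and follows essentially the same strategy as the paper's proof (Appendix~A): obtain $L=\Theta(\delta^{-2(N-1)})$ by combining a spherical-code count of $\Theta(\delta^{-(N-1)})$ torus layers with a per-torus length $\Theta(\delta^{-(N-1)})$ from the Scaled Lifting Construction and \eqref{eq:BoundDensidade}, and then balance the low-noise term against the tail of the noise norm.

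The two genuine technical deviations are both on your side and both harmless (arguably tidier). First, you restrict to the fixed cap $\{c_i\ge\epsilon_0\}$, whereas the paper (its Lemma~2) lets the threshold shrink as $\varepsilon=\rho^{\mu/N}$; your choice already gives $|\Sc_+|=\Theta(\delta^{-(N-1)})$ and $\prod_i c_i\ge\epsilon_0^{N}$ with constants independent of $\delta$, so no $\mu$ is spent at the length stage. Second, you optimize with $\delta^2=K\sigma^2(\log P)/P$, absorbing the resulting $(\log P)^{2(N-1)}$ into $P^{\mu}$, while the paper (its Lemma~1) takes $\delta=\Theta(P^{-1/2+\bar\mu})$ so that $\sqrt P\,\delta=\Theta(P^{\bar\mu})$ and the jump term is super-polynomially small without logarithms; the two balancings are interchangeable. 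The uniformity issue you flag---choosing $w=w(\cc,\delta)$ so that the Lifting Construction meets a fixed fraction of \eqref{eq:BoundDensidade} simultaneously for all $\cc$ in the cap---is treated at the same level of rigor in the paper, which simply asserts that ``it is possible to find vectors $\pmb u_i$ such that their norms are close to the bound'' and lets the slack $\varepsilon\to0$ as $\rho\to0$.
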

\begin{proof}
See Appendix A.
\end{proof}

\section{Simulations}
\label{sec:sim}
We now present some simulation results in order to illustrate the performance of our proposed scheme. We consider sources uniformly distributed over the interval $[0,1]$ and normally distributed with variance $\sigma_s^2=1/2$.  As a benchmark we chose the spherical code scheme (V$\&$C) presented in \cite{Sueli}, since this scheme has comparable performance with \cite{polynomial}, as shown in their simulations. We also plotted the curves corresponding to a simple linear modulation.  The simulation results correspond to $50000$ samples with $1$ to $6$ expansion rate ($N=3$) and the performance is measured in terms of $1/\mbox{mse}$ as a function of the $\mbox{SNR} = P/\sigma^2$.

\subsection{Uniform sources}

In Figure \ref{fig:graficoBall} (top), the Torus Layer Scheme (TL) performance for $M = 6$ tori, designed as in Example 1 with $t = 0.6$, and projection vector $\bm{u}= (1,2,198)$ is displayed. The spherical code scheme V$\&$C was simulated for $a = 18$. The parameters were chosen such that both schemes reach the asymptote for the same SNR (i.e., have approximately the same small-ball radius). As expected, our scheme outperforms (V$\&$C) in the low noise regime (about $13$dB higher in this example).

\begin{figure}[htb]
\centering
\includegraphics[scale=0.41, angle=-90]{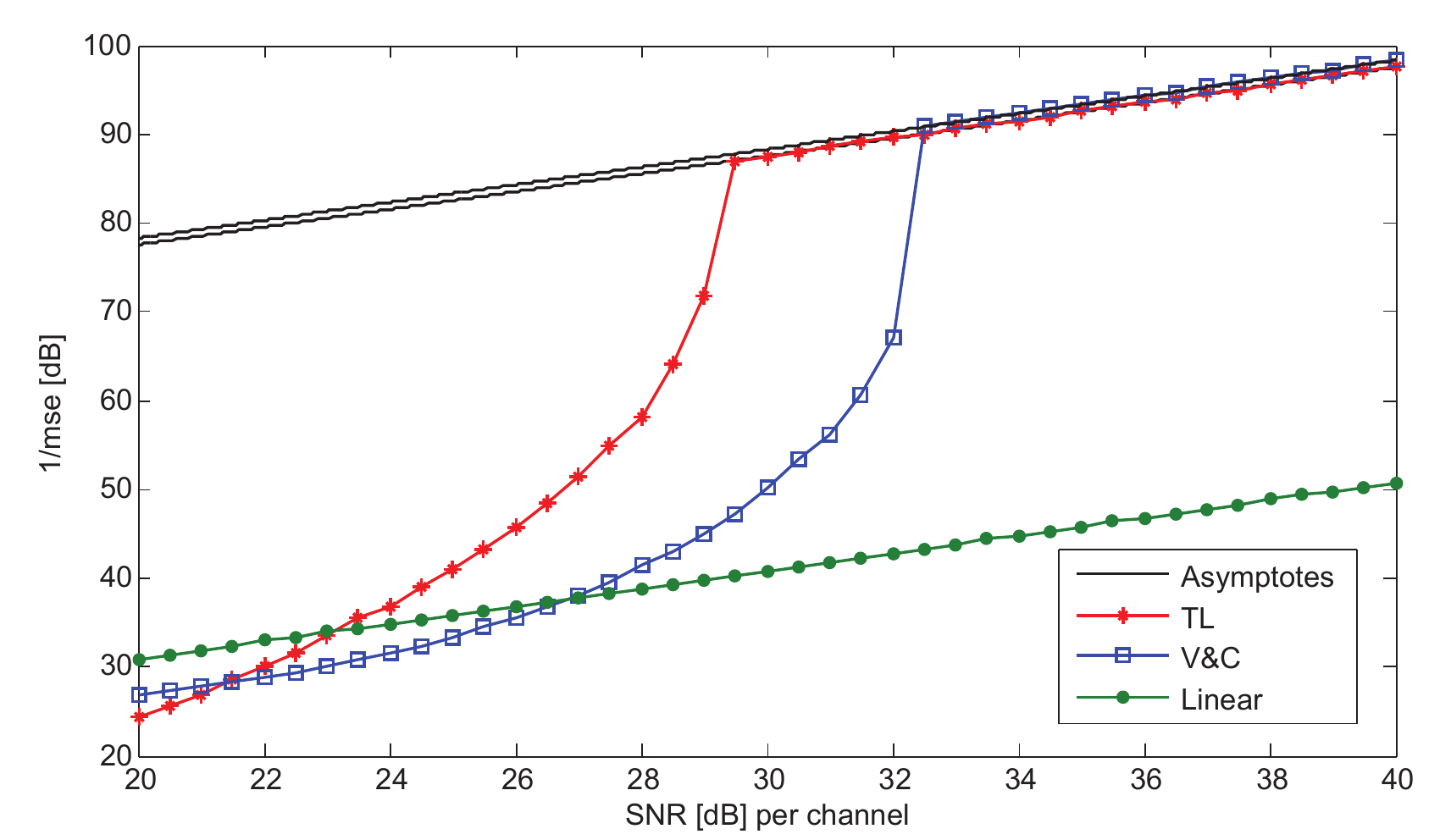}
\vspace{1cm} .
\includegraphics[scale=0.4, angle=-90]{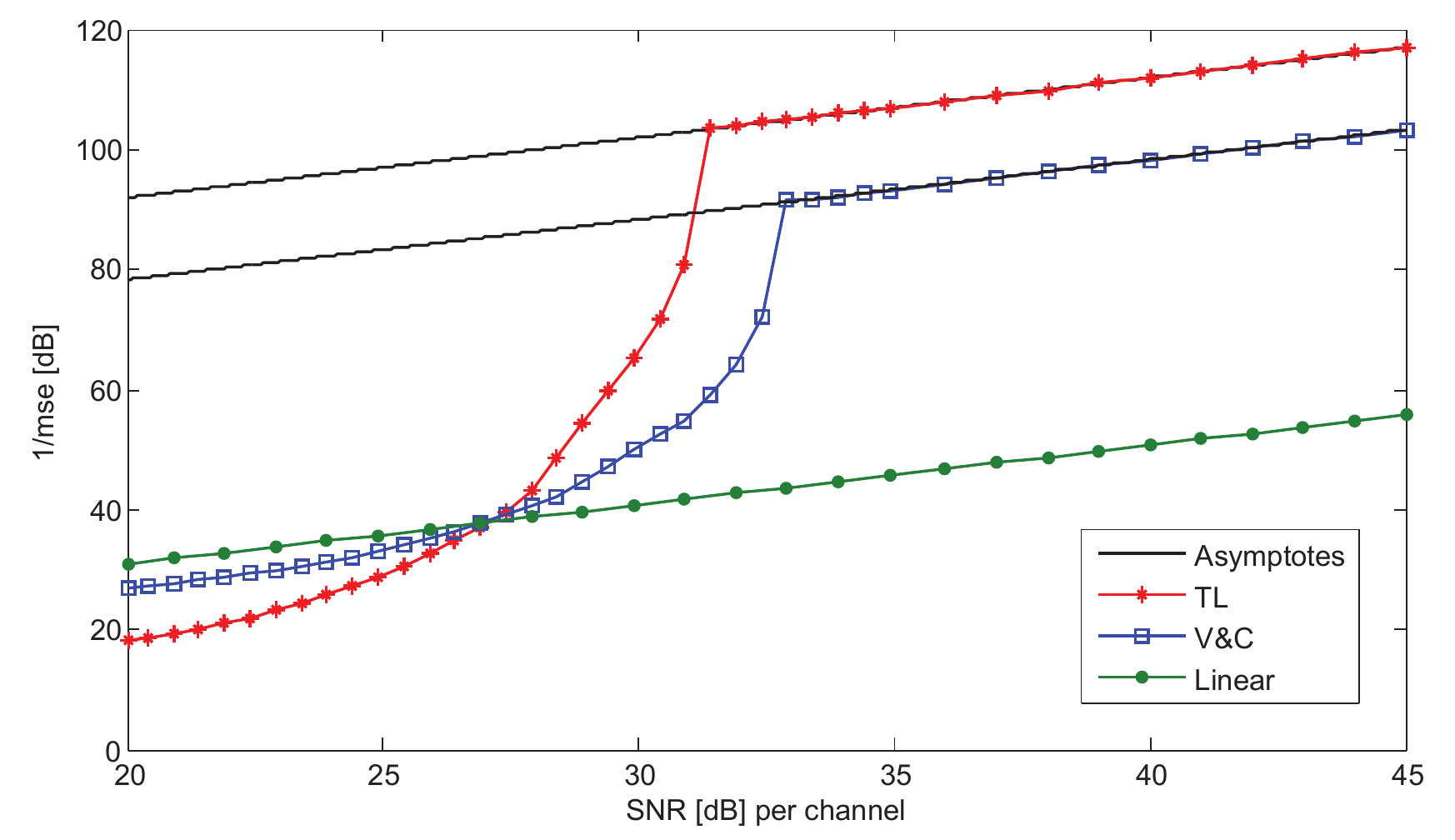}
\caption{Top: both schemes have about the same performance in the low noise regime. Bottom: both schemes reach their asymptote for the same SNR.}
\label{fig:graficoBall}
\end{figure}

Figure \ref{fig:graficoBall} shows, on the bottom, simulation results for $M = 6$ tori, also designed as in Example 1 with $t = 1.75$, and projection vector $\bm{u}= (1,4,34)$. The spherical code scheme V$\&$C was simulated also for $a = 18$. These parameters were chosen such that both schemes achieve about the same performance in low noise regime (same asymptote). In this case, our scheme reaches the asymptote for a smaller SNR ($5$dB in this example). In both graphics it is possible to see that linear modulation is a good choice only for very small SNR.

%

\subsection{Gaussian sources}

Although the theory is suitably developed for uniform sources, we present in this section a possible approach for the Gaussian case and also some simulation results. Let $x$ be drawn according to a normal distribution $\mathcal{N}(0,\sigma_s)$ and let $p(x)$ be the pdf of $x$. In order to use our coding scheme we need to map the real line into sub intervals of $[0,1]$ and then in layers of flat tori. 

Like in the uniform case, we consider a collection $T = \left\{ T_1, \ldots, T_M \right\}$ of $M$ tori designed as described in Section \ref{sec:torusLayers}. For each one of these tori, we have a curve $\CurvaT{k}(x)$ with length $L_k$ and the total length is $L = \sum_{k=1}^M L_k$.

Our first step to deal with Gaussian sources is to split the real line into $M$ sub intervals $\{Q_1, Q_2, \cdots, Q_M\}$ such that the area under $p(x)$ with domain restricted to $Q_k$ is equal to $\dfrac{L_k}{L}$. Let $\{x_1, x_2, \cdots, x_{M-1}\}$ be the endpoints, such that the intervals are defined as follows:
$$
	Q_1 = (-\infty, x_1], \, \, Q_k = (x_{k-1}, x_k],  \, \, Q_M = (x_{M-1},\infty).
$$

\begin{figure}[h!tb]
\centering
\includegraphics[scale=0.5]{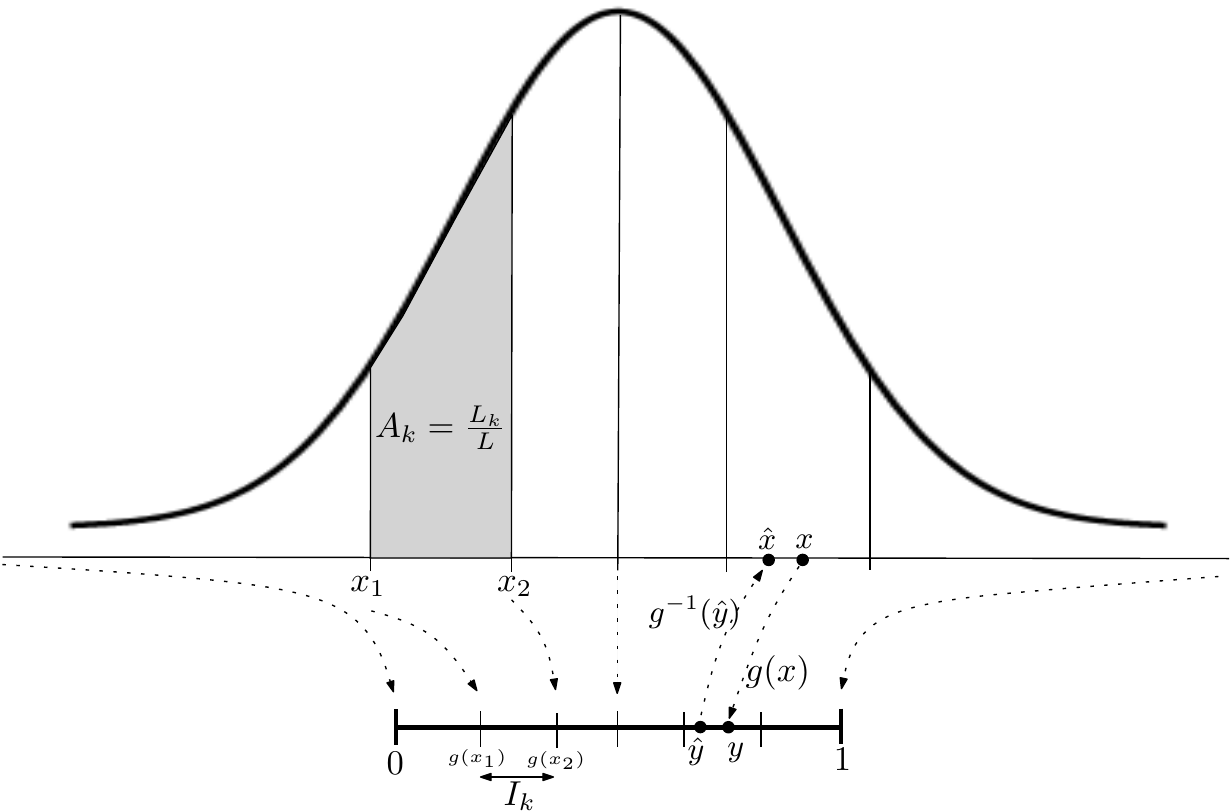}
\includegraphics[scale=0.3]{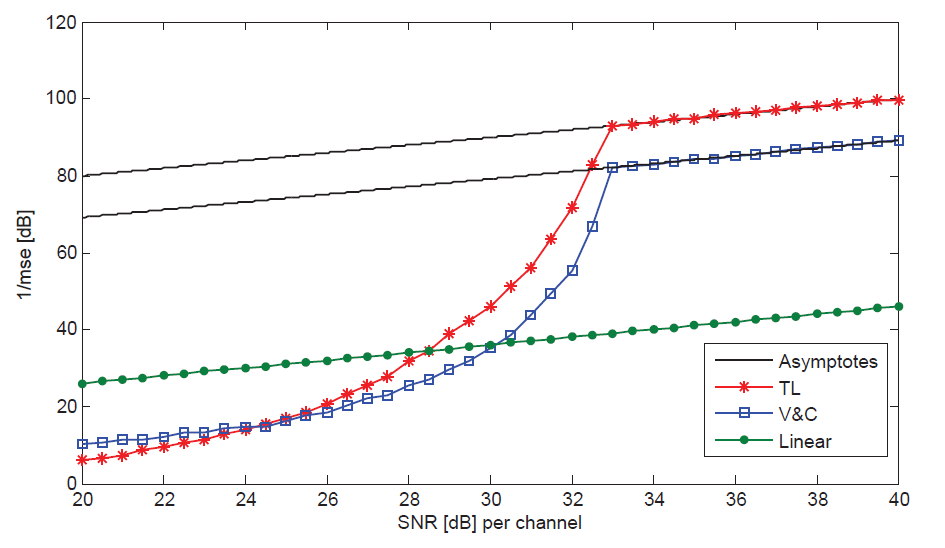}
\caption{Top: Interval splitting for Gaussian sources using the companding function $g(x)$. Bottom: Simulation results for a Gaussian source $\mathcal{N}(0,0.5)$.}
\label{fig:normaldivision}
\end{figure}
\noindent Then we apply the optimal companding function \cite{Sakrison} 
$$g:\R\rightarrow [0,1]$$
\begin{equation}
\displaystyle
	g(x)=\frac {\int_{-\infty}^x p(u)^{\frac{1}{3}} \, du} {\int_{-\infty}^{\infty} p(u)^{\frac{1}{3}} \, du}
\end{equation}
to map the real line into the interval $[0,1]$. The images $\{g(x_1), g(x_2), \cdots, g(x_{M-1}) \}$ induce a split of the interval $[0,1]$ into $M$ pieces $\{I_1, I_2, \cdots, I_M\}$, such that $$
	I_1  = [0, g(x_1)], \, \, I_k = (g(x_{k-1}), g(x_k)], \, \, I_M  = (g(x_{M-1}),1], 
	$$
 with lengths $\{L_{I_1}, L_{I_2}, \cdots, L_{I_M}\}$ and then we can define our bijective mapping for the Gaussian case as
$$\begin{array}{cc}
f_k: I_k  \to [0,1) \\
f_k(y) = \displaystyle\frac{y - \sum_{j=1}^{k-1} L_{I_j}}{L_{I_k}}
\end{array}, \mbox{ where } y=g(x).
$$
The full encoding map $\pmb{s}$ as well as the decoding process will be similar to the uniform case. Once we have the estimate $\hat{y}$ we must invert the companding function to get $\hat{x} = g^{-1}(\hat{y})$ and then compute the mse. This process is illustrated in Figure \ref{fig:normaldivision} (left).

In Figure \ref{fig:normaldivision} (top) we show simulation results for a Gaussian source with $\sigma_s = 0.5$. The parameters were the same as for the Uniform source simulation in Figure \ref{fig:graficoBall} (top). As we can see, the threshold is reached for a higher SNR than in the uniform case, but the TL scheme maintain some advantage over V$\&$C.

We finish this section with a remark on the implementation. Since the curves constructed from our method are closed, if the sent value is one extreme value of the interval $I_j = [L_{j-1}/L, L_j/L]$ it may happen that the decoder wrongly interpret it as the other extreme. In other words, if $L_{j-1}/L + \varepsilon$ is sent, then with a non-negligible probability, the decoder will output $L_j/L$. To prevent these errors, the encoder needs to ``shrink'' the interval $I_j$ by a factor $\alpha \in (0, 1)$, such that the curve is ``opened'', separating the encoded extremes values apart. However, this process deteriorates the low noise performance, so that $\alpha$ needs to be calibrated. In our examples, it sufficed to take any $\alpha$ between $0.7$ and $0.8$. This issue was briefly commented in \cite{Sueli}.

\section{Conclusion}
The problem of transmitting a continuous alphabet source over an AWGN channel was considered through an approach based on curves designed in layers of flat tori on the surface of a $(2N)$-dimensional Euclidean sphere. This approach explores connections with constructions of spherical codes and is related to the problem of finding dense projections of the lattice $c_1 \mathbb{Z} \oplus \ldots \oplus c_N \mathbb{Z}$.

This work is a generalization of both the scheme proposed in \cite{Sueli} and the Lifting Construction in \cite{FatStrut}. As a consequence, our scheme compares favorably to previous works in terms of the tradeoff between total length and small-ball radius, which is a proper figure of merit for this communication system.

In spite of the improvements in terms of length versus small-ball radius, the constructiveness, homogeneity and overall complexity of the decoding algorithm are features preserved from \cite{Sueli}. Simulations also show a good scaling of the mse with the channel SNR. Further questions include applications to sources that do not have limited support, or when it is hard to employ a companding function (e.g., Gaussian mixtures), and generalization to other bandwidth expansion ratios than $1:2N$ (the general case is to consider $K$ to $N$ expansion mappings, when $1 < K < N$). In the more general $K:N$ bandwidth expansion context we must consider $K$-dimensional manifolds in $\mathbb{R}^N$, and the analog codes could be built for instance considering projections of lattices onto $(N-K)$-dimensional subspaces and a generalized version of the shift-map system described in \cite{Sueli}.


\section*{Acknowledgment}

The authors would like to thank the Centre Interfacultaire Bernoulli (CIB) - EPFL, where part of this work was developed, during the special semester on Combinatorial, Algebraic and Algorithmic Aspects of Coding Theory, and Vinay Vaishampayan for helpful discussions on analog source-channel coding. The authors also thank the reviewers for the suggestions that improved the manuscript, specially for an observation that led us to Theorem \ref{eq:exponentOrder}.
\section*{Appendix A}
The big-O and big-$\Theta$ notations are employed throughout this appendix. We say that $f(n) = O(g(n))$ if there exists $n_0$ and a constant $c$ such that $|f(n)| \leq c|g(n)|$ for $n > n_0$. We say that $f(n) = \Theta(g(n))$ if $f(n) = O(g(n))$ and $g(n) = O(f(n))$. We have the following technical lemmas:
\begin{lema} Consider a family of analog codes given by the map \eqref{eq:encoding} normalized to average power $P$. Suppose that the total length and small-ball radius of the family satisfies $L = \Theta(\rho^{-(k-1)})$ for some $k > 1$, as $\rho \to 0$. Then, for any $P > 0$ and any arbitrarily small $\mu > 0$, there is a choice of parameters such that the \mbox{mse} decays with order $O(P^{-k+\mu})$.
\end{lema}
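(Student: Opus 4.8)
The plan is to decompose the mse relative to the event $E=\{\|\pmb z\|<\sqrt P\,\rho\}$ as in the discussion around \eqref{eq:lownoise}, where $\sqrt P\,\rho$ is (up to an absolute constant, which I suppress) the small-ball radius of the signal locus once \eqref{eq:encoding} is normalized to power $P$; here $\rho$ and $L=\Theta(\rho^{-(k-1)})$ refer to the unnormalized code, whose normalized version has length $\sqrt P\,L$ and, being homogeneous, constant stretch $\sqrt P\,L$. Since the source and the decoder output lie in $[0,1]$, so $(X-\hat X)^2\le 1$, we have
\[
\mbox{mse}\ \le\ E\!\left[(X-\hat X)^2\mid E\right]+P(E^c).
\]
On $E$ the received vector stays inside the embedded tube around the signal locus, so nearest-point decoding returns the correct fold and the error obeys the linearized estimate \eqref{eq:lownoise}; for a uniform source and the constant-stretch map \eqref{eq:encoding} this gives $E[(X-\hat X)^2\mid E]=O(\sigma^2/(PL^2))=O(\sigma^2\rho^{2(k-1)}/P)$.

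Next I would estimate $P(E^c)=P(\|\pmb z\|\ge\sqrt P\,\rho)$. As $\pmb z$ is a $2N$-dimensional Gaussian vector with per-coordinate variance $\sigma^2$, $\|\pmb z\|^2/\sigma^2$ is $\chi^2_{2N}$-distributed, and a standard tail bound gives, for $P\rho^2/\sigma^2$ large,
\[
P(E^c)=O\!\left(\big(P\rho^2/\sigma^2\big)^{N-1}e^{-P\rho^2/(2\sigma^2)}\right),
\]
whence
\[
\mbox{mse}=O\!\left(\frac{\sigma^2\rho^{2(k-1)}}{P}\right)+O\!\left(\big(P\rho^2/\sigma^2\big)^{N-1}e^{-P\rho^2/(2\sigma^2)}\right).
\]

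Finally, for each $P$ I would select the member of the family with small-ball radius $\rho=\rho(P)$ determined by $\rho^2=2k\sigma^2(\ln P)/P$ (if the family exists only for a discrete sequence of radii, take the nearest available one; the $\Theta$-estimate $L=\Theta(\rho^{-(k-1)})$ absorbs the bounded change in $\rho$). Then the exponential term becomes $(2k\ln P)^{N-1}P^{-k}$ and the first term becomes $(2k)^{k-1}\sigma^{2k}(\ln P)^{k-1}P^{-k}$, so $\mbox{mse}=O\!\big((\ln P)^{\max(k-1,N-1)}P^{-k}\big)$; since $(\ln P)^m=o(P^{\mu})$ for every $\mu>0$, this is $O(P^{-k+\mu})$. (Any $\rho$ with $\rho^2$ of order $(\ln P)/P$ and at most $P^{-1+\mu/(k-1)}$ works; the logarithmic factor is exactly the slack spent to reach exponent $-k+\mu$ rather than $-k$.)

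The step I expect to carry the real content is the uniform estimate $E[(X-\hat X)^2\mid E]=O(\sigma^2/(PL^2))$ as $\rho\to0$: one must verify that remaining in the tube of radius comparable to the small-ball radius genuinely forces nearest-point decoding onto the correct fold, and that the induced parameter error is comparable --- with an absolute constant, not one that deteriorates as the curves become longer and more tightly wound --- to the tangent-line prediction. This is essentially the low-noise analysis of \cite{Sakrison} and \cite{Sueli}, and the scale-invariance of the relevant local geometry together with the constant stretch of \eqref{eq:encoding} is what keeps that constant uniform over the family. The Gaussian tail estimate and the optimization over $\rho$ are then routine.
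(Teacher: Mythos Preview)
Your proposal is correct and follows essentially the same approach as the paper: split the mse into a ``wrong fold'' term controlled by a Gaussian tail in $\sqrt{P}\rho/\sigma$ and a ``low-noise'' term of order $\sigma^2/(PL^2)$, then choose $\rho=\rho(P)$ so that the tail term is negligible while the low-noise term has the desired order. The only differences are cosmetic: the paper bounds the jump probability by the one-dimensional tail $Q(\sqrt{P}\rho/\sigma)$ rather than your $\chi^2_{2N}$ bound on $\|\pmb z\|$, and it picks $\rho=\Theta(P^{-1/2+\mu/(2k-2)})$ (spending the $\mu$ directly as a power) instead of your $\rho^2\asymp(\ln P)/P$ (spending it on a polylog factor); either choice yields $O(P^{-k+\mu})$.
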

\begin{proof}The proof is an application of the technique in \cite[Appendix A]{Robust} or even  \cite[Sec. VI. B]{Sueli}, that we briefly describe for sake of completeness. If $P(\mbox{``jump''})$ denotes the probability that the estimate $\hat{x}$ is decoded to the wrong fold of the curve, we can bound it by $Q(\sqrt{P} \rho/\sigma)$, where $Q(x)$ is the tail probability of a standard normal distribution. On the other hand, when there is no jump, the mse is proportional to $\sigma^2/(\alpha^2 L^2)$, where $L$ is the length of the curve and ${\alpha = \sqrt{P}}$. Since $\rho \to 0$, there is a choice of parameters such that $\rho = \Theta(P^{-1/2+\bar{\mu}})$, for $\bar{\mu} > 0$ arbitrarily small. If we choose $\bar{\mu} = \mu/(2k-2)$, then $\sqrt{P} \rho = \Theta(P^{\bar{\mu}})$ and from the previous observations:

\begin{equation}
\begin{split}
\mbox{mse} &\leq E[(x-\hat{x})^2 | \mbox{``no jump''}] + P(\mbox{''jump''}) \\ &\leq \underbrace{\displaystyle O(P^{-k+\mu})}_{\displaystyle \frac{\sigma^2}{P L^2}} + \underbrace{\displaystyle Q\left(\frac{\sqrt{P} \rho}{\sigma}\right)}_{\displaystyle \mbox{exponentially small}}.
\end{split}
\end{equation}
\end{proof}

\begin{lema} If $\rho > 0$ and $\mu > 0$ are sufficiently small, there exists a spherical code $\mbox{SC}_{+} = \left\{ \pmb{c}_1, \ldots, \pmb{c}_M \right\}$ with minimum distance $2 \rho$  satisfying:

\begin{equation} \sum_{i=1}^ M{\prod_{j=1}^N c_{ij}} = \Theta(\rho^{-(N-1)+\mu}).
\end{equation}
\end{lema}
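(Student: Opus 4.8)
The plan is to build a code that is close to the largest possible for the given separation --- of size about $\rho^{-(N-1)+\mu}$, just below the obvious ceiling $\Theta(\rho^{-(N-1)})$ for a $2\rho$-separated subset of $S^{N-1}$ --- but whose codewords all sit in a \emph{fixed} region of the positive orthant where each product $\prod_{j}c_{ij}$ is bounded below by a constant independent of $\rho$. The point of the estimate is that $\prod_j c_{ij}$ is, up to the factor $(2\pi)^N$, the volume of the torus $T_{\pmb c_i}$, so that by \eqref{eq:BoundDensidade} and the Scaled Lifting Construction of Section V each $T_{\pmb c_i}$ carries a curve of small-ball radius $\asymp\rho$ and length $\asymp\rho^{-(N-1)}\prod_j c_{ij}$ (the requirement $\min_j c_{ij}\ge\epsilon_0>0$ below keeps the constants in \eqref{eq:distortionCurve} uniform); hence the torus-layer code built on $\mathrm{SC}_+$ has total length $L\asymp\rho^{-(N-1)}\sum_i\prod_j c_{ij}=\Theta(\rho^{-(2N-2)+\mu})$, which is exactly the hypothesis of Lemma 1 with $k=2N-1$ (after renaming $\mu$) and yields Theorem~\ref{eq:exponentOrder}.

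First I would fix, once and for all and independently of $\rho$, a constant $\epsilon_0\in(0,1/\sqrt N)$ and an open spherical cap $K\subset S^{N-1}$ centered at $\pmb{\hat e}=\tfrac1{\sqrt N}(1,\dots,1)$ of small enough angular radius that every $\pmb c\in K$ satisfies $c_j\ge\epsilon_0$ for all $j$; this is possible because the condition $\min_j c_j>0$ is open and holds at $\pmb{\hat e}$. Then $\prod_{j=1}^N c_j\ge\epsilon_0^{\,N}$ for all $\pmb c\in K$. Next I would record the packing count on $K$: a maximal $2\rho$-separated subset $\mathcal C_{\max}\subset K$ has cardinality $\Theta(\rho^{-(N-1)})$, since the open balls of radius $\rho$ about its points are pairwise disjoint and contained in a fixed neighbourhood of $K$ while the balls of radius $2\rho$ cover $K$ by maximality; comparing $(N-1)$-dimensional volumes on the sphere (all comparison constants depending only on $N$ and the fixed cap $K$) gives $c_1\rho^{-(N-1)}\le|\mathcal C_{\max}|\le c_2\rho^{-(N-1)}$ for $\rho$ small.

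Now, for $\rho$ small enough that $M:=\lceil\rho^{-(N-1)+\mu}\rceil<c_1\rho^{-(N-1)}$, I would construct $\mathcal C_{\max}$ greedily, starting from a pair of points of $K$ at distance exactly $2\rho$ (such a pair exists for small $\rho$ because $\pmb{\hat e}$ is interior to $K$), and then take $\mathrm{SC}_+$ to be the first $M$ points produced (the seed pair included). This code has coordinates $\ge\epsilon_0\ge 0$, minimum distance exactly $2\rho$ (it is $2\rho$-separated and contains the seed pair), and
\[
M\,\epsilon_0^{\,N}\ \le\ \sum_{i=1}^M\prod_{j=1}^N c_{ij}\ \le\ M\,N^{-N/2},
\]
the right-hand inequality being AM--GM applied to $\sum_j c_{ij}^2=1$. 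Since $M=\Theta(\rho^{-(N-1)+\mu})$ and $\epsilon_0,N$ are fixed, both sides are $\Theta(\rho^{-(N-1)+\mu})$, which is the claim.

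The only genuinely non-routine ingredient is the packing estimate $|\mathcal C_{\max}|=\Theta(\rho^{-(N-1)})$ on the curved cap $K$; I expect this to be the main (but modest) obstacle, and I would only sketch it: $K$ is bi-Lipschitz to a Euclidean $(N-1)$-ball with constants independent of $\rho$, and a maximal $\varepsilon$-net of a Euclidean region of positive finite measure has $\Theta(\varepsilon^{-(N-1)})$ points by the standard volume-packing argument. Everything else --- the choice of $\epsilon_0$ and $K$, the AM--GM upper bound, and the bookkeeping that forces the minimum distance to equal $2\rho$ rather than merely exceed it --- is elementary and uniform in $\rho$.
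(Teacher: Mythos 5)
Your proof is correct, and it takes a genuinely different route from the paper's. The paper starts from the Chabauty--Shannon--Wyner bound to get a $2\rho$-separated code of size $\Theta(\rho^{-(N-1)})$ filling the whole positive orthant, and then handles the degenerate codewords (those with some small coordinate, hence small product) by pruning with a threshold $\varepsilon=\rho^{\mu/N}$ that tends to zero slowly; the factor $\varepsilon^N=\rho^{\mu}$ is exactly where the $\mu$-loss in the exponent comes from. You instead confine the code to a fixed cap $K$ around $\pmb{\hat e}$ on which every coordinate is bounded below by a constant $\epsilon_0$ independent of $\rho$, so that every product $\prod_j c_{ij}$ lies in $[\epsilon_0^N, N^{-N/2}]$ and the sum is simply $\Theta(M)$; the volume-packing count on the cap then does the rest. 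Your approach buys two things: it actually establishes the two-sided $\Theta$ bound as literally stated (the paper's argument really only gives the lower bound, which is all that is used downstream), and it shows the $\mu$ is unnecessary here --- keeping the full maximal packing of $K$ would give $\sum_i\prod_j c_{ij}=\Theta(\rho^{-(N-1)})$, i.e.\ $\mu=0$; you only truncate to $M=\lceil\rho^{-(N-1)+\mu}\rceil$ to match the statement as written. The price is that you discard a constant fraction of the sphere (only tori near the maximal-volume torus are used), which costs nothing asymptotically. The seed-pair device to force the minimum distance to equal $2\rho$ exactly, and the bi-Lipschitz packing count on the cap, are both routine and correctly invoked.
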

\begin{proof} From the Chabauty-Shannon-Wyner bound \cite[Thm. 1.6.2]{eric}, there exists a spherical code with minimum distance $2 \rho$ and number of codewords at least $M = \Theta(\rho^{-(N-1)})$. In fact, by considering only the volume of the positive part of the sphere ($x_i \geq 0$ for all $i$), the same argument shows that there exists a spherical $SC_{+}$ with codewords having positive coordinates and satisfying $| SC_{+} | = \Theta(\rho^{-(N-1)}) $. Now, denote by $SC_{+}^\varepsilon$ the subcode of $SC_{+}$ such that $c_{ij} > \varepsilon$ for some positive small $\varepsilon$. We have $  \sum_{i=1}^ M{\prod_{j=1}^N c_{ij}} \geq \sum_{\pmb{c}_i\in SC_{+}^{\varepsilon}}{\prod_{j=1}^N c_{ij}} \geq \varepsilon^N |SC_{+}^\varepsilon|.$ As $\varepsilon \to 0$, $|SC_{+}^\varepsilon| \to |SC_{+}| = \Theta(\rho^{-(N-1)})$ , however if $\varepsilon$ goes to $0$ too fast with respect to the distance, we may be deleting too many  codewords to construct $SC_{+}^\varepsilon$ and hence will not achieve the desired exponent. The choice $\varepsilon = \rho^{\mu/N}$ for arbitrarily small (but positive) $\mu$ does the trick.

\end{proof}

\textit{Proof of Theorem \ref{eq:exponentOrder}:} If the spherical code $SC_{+} = \left\{ \pmb{c}_1, \ldots, \pmb{c}_M \right\}$ has $M$ words and minimum distance $2 \rho$ sufficiently small, from the Scaled Lifting Construction (Section V) it is possible to find vectors $\pmb{u}_i$ such that their norms are close to the bound \eqref{eq:BoundDensidade}, hence for some $\varepsilon$, the total length will be

\begin{equation} L =  \sum_{i=1}^M 2 \pi {\left\|\hat{\pmb{u}_i} \right\|} = \frac{(2\pi)^{N}}{\rho^{N-1}} \sum_{i=1}^ M{\prod_{j=1}^N c_{ij}}\left( \frac{\Delta_{N-1}}{\Vo_{N-1}} - \varepsilon\right),
\end{equation}
where $c_{ij}$ is the $j$-th coordinate of torus $T_i$ and $\varepsilon \to 0$ as $\rho \to 0$. Thus desconsidering the ``sum of the product'' term, the length would grow with order  $\rho^{-(N-1)}$. From Lemma 2, we can choose $\pmb{c}_i$ such that the the sum has order $\Theta(\rho^{-(N-1)})$, hence $L = \Theta(\rho^{-(2N-2)})$, and from Lemma 1 the result holds. \QED

\bibliographystyle{plain}
\bibliography{curves}

%
%
%
%
%
%

\end{document}